\documentclass[a4paper,10pt]{article}
\usepackage[utf8]{inputenc}
\usepackage[T1]{fontenc}
\usepackage{comment}

\usepackage{a4wide}
\usepackage{indentfirst}

\usepackage{amsmath,amssymb}
\usepackage{amsthm}

\usepackage{authblk}

\usepackage[dvipsnames,table,xcdraw,svgnames]{xcolor}

\usepackage{hyperref}
\hypersetup{
    breaklinks=true,
    colorlinks=true,
    linkcolor=Navy,
    citecolor=Navy,
    urlcolor=Navy
}

\usepackage{graphicx}
\graphicspath{{./fig/},{./fig-thesis/}}
\usepackage[verbose]{epstopdf}
\usepackage{booktabs}
\usepackage{multirow}

\usepackage{paralist} %
\usepackage{enumitem}

\usepackage[absolute,overlay,showboxes]{textpos}
\setlength{\TPHorizModule}{1in}
\setlength{\TPVertModule}{1in}
\setlength{\TPboxrulesize}{0.8pt}
\TPMargin{2mm}
\textblockrulecolour{Navy}

\usepackage{natbib}

\setlength{\bibsep}{0pt}

\usepackage{soulutf8}

\soulregister\cite7%
\soulregister\citep7%
\soulregister\eqref7%
\soulregister\pageref7%
\soulregister\ref7%

\usepackage{xifthen}
\newboolean{default}
\newcommand{\case}{}
\newcommand{\default}{}
\newenvironment{switch}[1]{%
    \setboolean{default}{true}
    \renewcommand{\case}[2]{\ifthenelse{\equal{#1}{##1}}{%
        \setboolean{default}{false}##2}{}}%
    \renewcommand{\default}[1]{\ifthenelse{\boolean{default}}{##1}{}}
}{}

\newtheorem{theorem}{Theorem}[section] 
\newtheorem{definition}[theorem]{Definition} 

\newtheorem{lemma}[theorem]{Lemma} 
\newtheorem{proposition}[theorem]{Proposition} 
\newtheorem{remark}[theorem]{Remark} 

\newcommand{\doi}[1]{DOI~\href{\detokenize{http://dx.doi.org/#1}}{\detokenize{#1}}}
\newcommand{\zblnumber}[1]{Zbl~\href{\detokenize{https://zbmath.org/?q=an:#1}}{\detokenize{#1}}}
\newcommand{\mrnumber}[1]{\href{\detokenize{https://www.ams.org/mathscinet-getitem?mr=#1}}{\detokenize{MR#1}}}

\newcommand{\msc}[1]{\href{https://mathscinet.ams.org/mathscinet/msc/msc2020.html?s=#1}{#1}}
\newcommand{\acm}[1]{\href{https://www.acm.org/publications/computing-classification-system/1998/#1}{\uppercase{#1}}}
\newcommand{\jel}[1]{\href{https://www.aeaweb.org/jel/guide/jel.php}{#1}}%

\newcommand{\Dom}{\operatorname{Dom}}
\renewcommand{\d}{\,\mathrm{d}}

\newcommand{\e}{\mathrm{e}}
\newcommand{\D}{\mathbb{D}}
\newcommand{\E}{\mathbb{E}}
\newcommand{\F}{\mathcal{F}}

\renewcommand{\P}{\mathbb{P}}
\newcommand{\Q}{\mathbb{Q}}
\newcommand{\R}{\mathbb{R}}

\newcommand{\1}{\mathbf{1}}

\newcommand{\ccode}[2]{\par
        \vspace*{8pt}
        {{\leftskip18pt\rightskip\leftskip
        \noindent{\it #1}\/: #2\par}}\par}
\newcommand{\keywords}[1]{\ccode{Keywords}{#1}}
\newcommand{\email}[1]{\href{mailto:#1}{#1}}
\def\received#1{Received~#1\par}

\newcommand{\jpTitle}{Computation of Greeks under rough Volterra stochastic volatility models using the Malliavin calculus approach}
\newcommand{\jpAuthors}{M. Al-Foraih, \`{O}. Bur\'{e}s, J. Posp\'{\i}\v{s}il and J. Vives}
\newcommand{\jpKeywords}{Stochastic Volatility; Rough Volatility; Greeks; Malliavin calculus}
\newcommand{\jpMSC}{\msc{60G22}; \msc{91G20}; \msc{91G60}}
\newcommand{\jpACM}{\acm{g.3}}
\newcommand{\jpJEL}{\jel{C58}; \jel{G12}; \jel{C63}}%
\newcommand{\jpDateReceived}{4 July 2025} %
\newcommand{\jpDate}{}%

\author[1]{Mishari Al-Foraih}%
\author[2]{\`{O}scar Bur\'{e}s}%
\author[3]{Jan Posp\'{\i}\v{s}il\thanks{Corresponding author, \email{honik@kma.zcu.cz}}} %
\author[2]{Josep Vives} %
\affil[1]{Department of Mathematics, Faculty of Science, Kuwait University, Al-Shidadiya, Kuwait,\vspace*{3pt}}
\affil[2]{Facultat d'Economia i Empresa, Universitat de Barcelona, \authorcr Diagonal 690--696, 08034 Barcelona, Spain (Catalunya),\vspace*{3pt}}
\affil[3]{NTIS - New Technologies for the Information Society, Faculty of Applied Sciences, \authorcr University of West Bohemia, Univerzitn\'{\i} 2732/8, 301 00 Plze\v{n}, Czech Republic,\vspace*{3pt}}

\ifpdf
\hypersetup{
  pdftitle={\jpTitle},
  pdfauthor={\jpAuthors},
  pdfkeywords={\jpKeywords},
  pdfinfo={
      MSC={\jpMSC},
      JEL={\jpJEL}
  }
}
\fi

\title{\textcolor{Navy}{\textsc{\jpTitle}}}
\date{\jpDate}

\begin{document}

\maketitle

\begin{center}
\received{\jpDateReceived}
\end{center}

\begin{abstract}
Using Malliavin calculus techniques, we obtain formulas for computing Greeks under different rough Volterra stochastic volatility models. Due to the fact that underlying prices are not always square integrable, we extend the classical integration by parts formula to integrable but not necessarily square integrable functionals. First of all, we obtain formulas for general stochastic volatility (SV) models, concretely the Greeks Delta, Gamma, Rho, Vega and we introduce the Greek with respect to the roughness parameter. Then, the particular case of rough Volterra SV models is analyzed. Finally, three examples are treated in detail: the family of alpha-RFSV models, that includes rough versions of SABR and Bergomi models, a mixed alpha-RFSV model with two different Hurst parameters representing short (roughness) and long memory, and the rough Stein-Stein model. For different models and Greeks we show a numerical convergence of our formulas in Monte Carlo simulations and depict for example a dependence of the Greeks on the roughness parameter.
 \end{abstract}

\keywords{\jpKeywords}
\ccode{MSC classification}{\jpMSC}
\ccode{ACM classification}{\jpACM}
\ccode{JEL classification}{\jpJEL}

\setcounter{tocdepth}{2}
\tableofcontents
\section{Introduction}\label{sec:introduction}

Sensitivity measures or Greeks are essential tools in hedging financial derivatives. Given a financial derivative or a portfolio with derivatives, Greeks measure the impact on its price as a result of changes in the different parameters on which it depends. In the simple case of \cite{BlackScholes73} model, for plain vanilla options, closed-form formulas can easily be obtained. But once one departs from this simple case, that is, for more complex derivatives or for more complex models for the underlying securities, no closed-form formulas are available, and the use of Monte Carlo simulations is needed. Due to the fact that numerical computation of Greeks involves the numerical computation of derivatives, procedures are complicated and slow. 

The application of Malliavin calculus techniques, concretely, the so-called, integration by parts formula, see \cite{Fournie99} and \cite{Fournie01}, improved dramatically the efficiency in terms of computational time required for the numerical computation of Greeks, specially for discontinuous payoffs. For surveys of the Malliavin Calculus technique under the Black-Scholes framework see \cite{Montero03} or Chapter 6 in \cite{Nualart06}.

It is very well-known since thirty years ago that one of the main issues of Black-Scholes model, the assumption of constant volatility of the underlying asset, does not describe correctly the empirical reality observed in financial markets. In fact, realized volatility time series tends to cluster depending on the spot asset level and it certainly does not take on a constant value within a reasonable time-frame, see for example \cite{Cont01}. To deal with such inconsistencies, stochastic volatility (SV) models were proposed originally by \citet{HullWhite87} and later further developed in many other papers becoming a research field itself. Three of the most known SV models are \cite{SteinStein91}, \citet{Heston93} and SABR (see \cite{Hagan02}). In SV models it is assumed that the instantaneous volatility of asset returns is of random nature. Specifically, the latter two approaches, SABR and Heston models, became popular in the eyes of both practitioners and academics. All of cited SV models assume a volatility essentially described by a stochastic differential equation driven by a Brownian motion. 

Different extensions of SV models have been developed. One that is not treated in the present paper is the addition of jumps in the price dynamics or the substitution of the Brownian motions that drives price and volatility by Lévy processes. 

The extension in which we are interested in this paper is based on the fact that independent increments of the Brownian motion turned out to be a limitation in describing the real implied volatilities observed in financial markets. This helped to increase the popularity of the fractional Brownian motion (fBm), a generalization of the Brownian motion that allows the correlation of increments depending on the so-called Hurst index $H \in (0,1)$. If $H=1/2$, one gets the standard Brownian motion, if $H>1/2$, the increments are positively correlated and trajectories are more regular, and if $H<1/2$, the increments are negatively correlated and trajectories are less regular. In the latter case we speak about the rough regime. Existence and uniqueness of a strong solution for stochastic differential equations driven by fractional Brownian motion was proved by \cite{NualartOuknine2002}.

The use of rough volatility models, that is, stochastic volatility models with the volatility dynamics described partially or completely by a fractional Brownian motion or a similar fractional process has since ten years ago also become a research field itself. The first rough volatility model was introduced in \cite{Alos07} in an effort to better describe the smirk of the implied volatility observed in markets near expiration. In \cite{Bayer16}  and \cite{Gatheral18} a detailed analysis of rough fractional stochastic volatility (RFSV) models is given. They seem to be very consistent with market option prices and realized volatility time series, and moreover, they provide superior volatility prediction results to several other models, see \cite{Bennedsen17}. Several approaches to the exact and approximate option pricing formulas in models where volatility is a fractional Ornstein--Uhlenbeck (fOU) or fractional Cox--Ingersoll--Ross (fCIR) process were introduced in \cite{Mishura2019}. A comprehensive survey of continuous stochastic volatility models including the fractional and rough models was written only recently by \cite{DiNunno2023From}.

Computation of Greeks for jump-diffusion models, including local volatility models, is a well-developed topic, see, for example, \cite{Petrou08}, \cite{Eddahbi15}, \cite{Eberlein16}. Two main methods are used: Fourier transform method and Malliavin calculus method. 

But for stochastic volatility models things are less developed. For Heston and Bates models we find formulas, using the Malliavin calculus approach in \cite{Davis06} and in \cite{Mhlanga15}. General formulas to compute Greeks for options under an SV model were obtained by  \cite{El-Khatib09} using Malliavin calculus techniques and in \cite{Khedher12} using both Malliavin calculus techniques and the Fourier transform method. Greeks for SABR model have been obtained in \cite{Yamada17}. \cite{Yilmaz18} calculated Greeks for SV models with both stochastic interest rate and stochastic volatility using Monte Carlo simulation. Finally, \cite{Yolcu18} applied Malliavin calculus for a general SV models and calculated the Delta of different SV models such as Heston and Stein and Stein models. 

The aim of the present paper is to apply Malliavin calculus techniques to compute several Greeks for derivatives under a rough Volterra SV model.

The structure of the paper is the following. In Section~\ref{sec:preliminaries} we recall the definition of Greeks and Malliavin calculus fundamentals and introduce Volterra processes. In Section~\ref{sec:methodology} we introduce rough Volterra models and we derive the Malliavin weights for general Volterra stochastic volatility models. In Section \ref{sec:results1} we derive the formulas for particular models, namely for the $\alpha$RFSV model, introduced for the first time by \cite{MerinoPospisilSobotkaSottinenVives21ijtaf}, for a new mixed $\alpha$RFSV model, and for the rough Stein-Stein model. Other models such as the rough Bergomi model, will be handled as a special case of the $\alpha$RFSV model. We will present also some numerical results, especially we will demonstrate numerically the convergence of selected obtained formulas. We conclude in Section~\ref{sec:conclusion}.

\section{Preliminaries and notation}\label{sec:preliminaries}

In this section we introduce the concept of Greeks (Section \ref{ssec:Greeks}) and the Malliavin calculus fundamentals (Section \ref{ssec:Malliavin}), including the corresponding integration by parts formula. We also give a short introduction to rough Volterra processes (Section \ref{ssec:rVolterra_processes}).

\subsection{Greeks}\label{ssec:Greeks}

Consider a price process $S:=\{S_t, t\geq 0\}$. Let $\{{\cal F}_t, t\geq 0\}$ be the completed natural filtration generated by process $S.$ A financial derivative can be seen as a contingent claim with payoff $F$, where $F$ is a random variable adapted to ${\cal F}_T$ for a fixed expiry date $T>0.$ Assume a fixed instantaneous interest rate $r\geq 0.$ Under the no arbitrage principle, the price at $t\in [0,T]$ of the derivative $F$ is given by

$$P_t:={\E}_{\Q}[\e^{-r(T-t)}F|{\cal F}_t],$$ 
where $\Q$ is a risk-neutral probability measure. 

The price $P_t$ depends on the different parameters of the model that describes the underlying price $S$ like the initial price $S_0$ or the initial volatility or variance $V_0$, the parameters that describe the derivative like the maturity date $T$ or the strike price $K$ in the case of options, and parameters of the market like the interest rate $r.$ 

Greeks or price sensitivities are used in hedging and for measuring and managing risk. They allow to predict the immediate future movements of a derivative price. For simplicity we assume in all the paper that $t=0$ and $F=f(S_T)$ for a certain function $f.$ The Greeks involved in the present paper are given by the following definition.

\begin{definition}[Greeks]\label{d:greeks}
	
Greeks Delta, Gamma, Rho, Vega and the derivative with respect to the Hurst parameter in rough volatility models are defined respectively as

\begin{align}
\Delta &= \partial_{S_0} \E_{\Q}\left[ \e^{-rT} f(S_T)\right], \\
\Gamma &= \partial_{S_0} \Delta=\partial^2_{S_0, S_0}  \E_{\Q}\left[ \e^{-rT} f(S_T)\right],\\
\varrho &= \partial_{r} \E_{\Q}\left[ \e^{-rT} f(S_T)\right], \\
\mathcal{V} &= \partial_{V_0} \E_{\Q}\left[ \e^{-rT} f(S_T)\right], \\
\mathcal{H} &= \partial_{H} \E_{\Q}\left[ \e^{-rT} f(S_T)\right],
\end{align}
where $S_0$ is the initial price of the stock, $V_0$ denotes the initial volatility, $r$ the fixed interest rate and $H$ the Hurst parameter associated to the fractional process underlying the rough Volterra process associated to the volatility. 
\end{definition}

\subsection{Malliavin calculus and integration by parts formula}\label{ssec:Malliavin}

The main Malliavin calculus tool to compute Greeks is the so called integration by parts formula (IBP). We recall in this subsection the IBP formula necessary for our purposes and the basic elements of Malliavin calculus needed to understand it. We refer the reader to the excellent references \cite{Nualart06} and \cite{Nualart18} for all proofs of this subsection and details. 

Let $W:=\{W_t, t\in [0,T]\}$ be a standard Brownian motion defined on a complete probability space $(\Omega, {\cal F}, {\P})$. Recall that it can be seen as an isonormal Gaussian process defined on the Hilbert space $H:=L^2[0,T]$ and write $W=\{W(h), h\in H\}$ where 

$$W(h):=\int_0^T h(s)\d W_s$$
is the Wiener-Itô integral of function $h.$

Malliavin calculus, sometimes also referred as Malliavin-Skorohod calculus, is based on two dual operators, the Malliavin derivative $D$ and the Skorohod integral $\delta$. 

Consider the set $\cal S$ of smooth random variables $F=f(W(h_1), \dots, W(h_n))$ where 
$f\in C^{\infty}_b ({\R}^n)$, the space of infinitely differentiable functions such that they and all its partial derivatives are bounded, and $h_1,\dots h_n$ are elements of $H.$ 

For any $F\in {\cal S}$, its Malliavin derivative is the $H-$valued random variable defined as 

$$DF:=\sum_{i=1}^n (\partial_i f)(W(h_1), \dots, W(h_n))h_i$$
where $\partial_i$ are for any $i=1,\dots, n$ the partial derivatives of function $f.$ 

Note that in particular, for $s,t_1,\dots, t_n\in [0,T],$

$$D_sF:=\sum_{i=1}^n (\partial_i f)(W_{t_1}, \dots, W_{t_n}){1\!\!1}_{[0,t_i]}(s).$$

This operator can be straightforward iterated and we can consider $k-$order derivatives $D^kF$ as elements of $L^p ([0,T]^k).$

Being $\cal S$ dense in $L^1$, it can be shown that derivatives $D^k$ are closed and densely defined operators from $L^p(\Omega)$ to $L^p(\Omega\times [0,T]^k)$, for any $p,k\geq 1,$ with domain ${\D}^{k,p}$ defined as the completion of $\cal S$ by the seminorm 

$$||F||^p_{k,p}:={\E}\left[|F|^p\right] + \sum_{j=1}^k {\E}\left[ ||D^j F||^p_{L^2([0,T]^k)}\right].$$

We can also define, for $k \geq 1$, the space $\D^{k,\infty} = \bigcap_{p \geq 1} \D^{k,p}$. 

Malliavin derivative satisfies a chain rule in the sense that if $\psi:{\R}^n\longrightarrow {\R}$ belongs to $C^1({\R}^n)$ and $F=(F^1,\dots, F^n)$ with $F^i\in {\D}^{1,p}$ for a certain $p\geq 1$ and for any $i=1,\dots, n$, we have 

$$D(\psi(F))=\sum_{i=1}^n (\partial_i \psi)(F)DF^i.$$

The adjoint of the Malliavin derivative operator $D$ is the so called divergence operator $\delta.$ It is an unbounded and closed operator from  $L^2(\Omega\times [0,T])$ to $L^2(\Omega)$, densely defined on a domain $\Dom(\delta)$, in such a way that for any $h \in\Dom(\delta)$ and $F \in {\D}^{1,2}$ we have the duality relationship 

\begin{equation}\label{malliavinduality}
\E[F \delta (h)]= \E \left[ \int_0^T D_tF h_t \d t\right].
\end{equation}

This operator coincides with the so-called Skorohod integral that extends the Itô integral to non-adapted processes in the sense that both integrals coincide for adapted processes $h$ of $L^2(\Omega\times [0,T])$. In this case we write 

\begin{equation}\label{e:delta}
\delta(h)=\int_0^T h_t \d W_t,
\end{equation}
which also implies that
\begin{equation}\label{e:delta1}
\delta(\1_{[0,T]})=W_T.
\end{equation}
We now give the definitions of two vector spaces that are contained in Dom($\delta$).
\begin{definition}
    We define $\mathbb{L}^{1,2}$ as the space of processes $u \in L^2(\Omega \times [0,T])$ such that $u_t \in \D^{1,2}$ for all $t \in [0,T]$.
\end{definition}
In a similar fashion as with $\mathbb{L}^{1,2}$, we define the following space.
\begin{definition}
    We define $\mathbb{L}^{1,\infty}$ as the space of processes $u \in L^{\infty}(\Omega \times [0,T])$ such that $u_t \in \D^{1,\infty}$ for all $t \in [0,T]$.
\end{definition}
\begin{remark}
    One can indeed check that
    \[
    \mathbb{L}^{1,\infty} \subset \mathbb{L}^{1,2} \subset \Dom(\delta).
    \]
    The proof of this inclusions follow from \cite{Nualart06} (see Proposition 1.3.1 and page 42).
\end{remark}
The following two results will be useful for our purposes:

\begin{theorem}\label{t:integral_of_product}
Let $F\in \mathbb{D}^{1,2}$ and $h \in\Dom(\delta).$ Then $Fh\in\Dom(\delta)$ and 
 
\begin{equation}\label{e:compos}
\delta(Fh)=F \delta(h) - \int_0^T D_t F h_t \d t.
\end{equation}
\end{theorem}
\begin{proof}
See \cite{Nualart06} (Proposition 1.3.3)
\end{proof}

\begin{proposition}\label{deriv-int}
Assume $h\in {\mathbb L}^{1,2}$ and the process 
$\{D_t h_s, s\in [0,T]\}$ belongs to $\Dom(\delta)$ for any $t\in [0,T].$ Furthermore, assume there is a version of the process 

$$\left\{\int_0^T D_t h_s \d W_s, t \in [0,T]\right\}$$
in $L^2(\Omega\times [0,T]).$ Then 
$\delta(h) \in \mathbb{D}^{1,2}$ and 
\begin{equation*}
D_t(\delta(h))=h_t+\int_0^T D_t h_s \d W_s
\end{equation*}
\end{proposition}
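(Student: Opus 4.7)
The plan is to first verify the identity on a dense class of smooth elementary processes by direct computation using the previously proved composition formula \eqref{e:compos}, and then to extend it to general $h$ satisfying the hypotheses via the closability of the operators $D$ and $\delta$.

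First I would take $h$ of the elementary form $h_s = \sum_{j=1}^n F_j g_j(s)$ with $F_j \in \mathcal{S}$ and $g_j \in L^2[0,T]$ deterministic. Applying \eqref{e:compos} to each summand, and using that for deterministic $g_j$ one has $\delta(g_j) = W(g_j)$, gives
$$\delta(h) = \sum_{j=1}^n \Bigl(F_j W(g_j) - \int_0^T D_s F_j \, g_j(s) \, ds\Bigr),$$
which lies in $\mathcal{S}$. Differentiating with the chain rule and using $D_t W(g_j) = g_j(t)$ yields
$$D_t \delta(h) = \sum_{j=1}^n \Bigl(D_t F_j \cdot W(g_j) + F_j g_j(t) - \int_0^T D_t D_s F_j \cdot g_j(s) \, ds\Bigr).$$
For the same $h$ one has $h_t = \sum_j F_j g_j(t)$ and $D_t h_s = \sum_j D_t F_j \cdot g_j(s)$, so invoking \eqref{e:compos} once more (with $D_t F_j$ in place of $F$) gives
$$\int_0^T D_t h_s \, dW_s = \sum_{j=1}^n \Bigl(D_t F_j \cdot W(g_j) - \int_0^T D_s D_t F_j \cdot g_j(s) \, ds\Bigr).$$
The symmetry $D_t D_s F_j = D_s D_t F_j$, which is immediate on smooth functionals from the symmetry of ordinary partial derivatives, then matches the two expressions and establishes the identity for elementary $h$.

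Next I would extend the identity to a general $h$ satisfying the hypotheses by approximation. One selects a sequence $h^{(n)}$ of elementary smooth processes such that $h^{(n)} \to h$ in $L^2(\Omega \times [0,T])$, the Malliavin derivatives $Dh^{(n)} \to Dh$ in $L^2(\Omega \times [0,T]^2)$, and the iterated Skorohod integrals $\int_0^T D_t h^{(n)}_s \, dW_s$ converge in $L^2(\Omega \times [0,T])$ to the version supplied by the hypothesis. Closability of $\delta$ then gives $\delta(h^{(n)}) \to \delta(h)$ in $L^2(\Omega)$, while the identity from step one ensures that $D\delta(h^{(n)})$ converges in $L^2(\Omega \times [0,T])$ to $h_\cdot + \int_0^T D_\cdot h_s \, dW_s$. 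Closability of $D$ then forces $\delta(h) \in \mathbb{D}^{1,2}$ with precisely the claimed derivative.

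The main obstacle is engineering the approximating sequence in the second step so that all three convergences — of $h^{(n)}$, of $Dh^{(n)}$, and of the iterated Skorohod integrals — hold simultaneously. The hypothesis that $t \mapsto \int_0^T D_t h_s \, dW_s$ admits an $L^2(\Omega \times [0,T])$ version is precisely what makes the last convergence achievable, but one must still verify that smooth elementary processes are dense in the norm encoding all three requirements; this density, together with the closability of $D$ and $\delta$, is the technical core of the argument.
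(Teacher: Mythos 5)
The paper does not prove this proposition itself; it simply cites \cite[Prop.~1.3.8]{Nualart06}, and your argument is essentially the proof given there: verify the identity for smooth elementary processes $\sum_j F_j g_j(s)$ via two applications of \eqref{e:compos} together with the symmetry $D_tD_sF_j=D_sD_tF_j$, then extend by closability of $D$ and $\delta$. Your elementary-process computation is correct, and you rightly flag that the only real technical content of the extension is choosing approximations $h^{(n)}$ for which the iterated Skorohod integrals $\int_0^T D_th_s^{(n)}\d W_s$ also converge in $L^2(\Omega\times[0,T])$ --- precisely where the hypothesis on the existence of an $L^2(\Omega\times[0,T])$ version enters, and the same point the cited reference disposes of with a brief limiting argument.
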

\begin{proof}
See \cite{Nualart06} (Proposition 1.3.8).
\end{proof}

Finally, we introduce the integration by parts formula and we recall the proof for the sake of completeness. 

\begin{theorem}[IBP formula]\label{t:d_theta}
Let $I$ be an open interval of ${\R}.$ Let $\{F^{\theta}, \theta \in I\}$ and $\{\Phi^{\theta}, \theta \in I\}$ be two families of functionals on ${\D}^{1,2}$, both continuously differentiable with respect to $\theta \in I$. Assume $h \in \mathbb{L}^{1,2}$ such that 

\begin{equation}
\int_0^T D_tF^{\theta} h_t\d t\neq 0 \text{ a.s. on } \{ \partial_{\theta}F^{\theta} \neq 0\} 
\end{equation}
and 

$$\frac{\Phi^{\theta} \partial_{\theta} F^{\theta}\, h}{\int_0^T D_tF^{\theta} h_t\d t}$$ 
belongs to $\Dom(\delta)$ and is continuous in $\theta$. Then, 

\begin{equation}\label{e:d_theta}
\partial_\theta \E[\Phi^{\theta} f(F^{\theta})]
= {\E}\left[ f(F^{\theta})\cdot \delta\left( \frac{\Phi^{\theta} \partial_\theta F^\theta\, h}{\int_0^T D_t F^\theta h(t)\d t}\right) \right] 
+ {\E}[f(F^\theta)\partial_{\theta} \Phi^\theta]
\end{equation}
for any function $f\in C^1_b (\R).$
\end{theorem}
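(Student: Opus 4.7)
The plan is to follow the classical Fournié et al.\ argument: combine the product rule, the Malliavin chain rule, and the duality \eqref{malliavinduality}. First, differentiate under the expectation, so that
\begin{equation*}
\partial_\theta \E[\Phi^\theta f(F^\theta)] = \E[f(F^\theta)\,\partial_\theta \Phi^\theta] + \E[\Phi^\theta f'(F^\theta)\,\partial_\theta F^\theta].
\end{equation*}
The first summand already appears on the right-hand side of \eqref{e:d_theta}, so the task is to rewrite the second one as $\E[f(F^\theta)\,\delta(\cdot)]$.

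To do this, multiply and divide by $\int_0^T D_tF^\theta h_t \d t$. The nonvanishing hypothesis on the set $\{\partial_\theta F^\theta \neq 0\}$ makes the quotient
\begin{equation*}
u^\theta := \frac{\Phi^\theta \,\partial_\theta F^\theta\, h}{\int_0^T D_t F^\theta h_t \d t}
\end{equation*}
a well-defined process (set to zero where the numerator vanishes), and by assumption $u^\theta \in \mathrm{Dom}(\delta)$. Since $f \in C^1_b$ and $F^\theta \in \mathbb{D}^{1,2}$, the Malliavin chain rule yields $f(F^\theta)\in \mathbb{D}^{1,2}$ with $D_t(f(F^\theta)) = f'(F^\theta)\, D_tF^\theta$. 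Hence
\begin{equation*}
\Phi^\theta f'(F^\theta)\,\partial_\theta F^\theta = f'(F^\theta) \int_0^T D_tF^\theta \cdot \frac{\Phi^\theta \partial_\theta F^\theta\, h_t}{\int_0^T D_s F^\theta h_s \d s}\d t = \int_0^T D_t\bigl(f(F^\theta)\bigr)\, u^\theta_t \d t.
\end{equation*}
Taking expectations and invoking \eqref{malliavinduality} with $F = f(F^\theta)$ and $h$ replaced by $u^\theta$ gives $\E[f(F^\theta)\,\delta(u^\theta)]$, which is exactly the first term on the right-hand side of \eqref{e:d_theta}. Adding back $\E[f(F^\theta)\partial_\theta \Phi^\theta]$ closes the identity.

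The main obstacle is analytic rather than algebraic: justifying the interchange of $\partial_\theta$ and $\E$ in the very first step. This is where the continuity-in-$\theta$ assumption on the quotient comes in, together with boundedness of $f$ and $f'$, which allow a dominated convergence argument producing an $L^1$ majorant for the difference quotients of $\theta\mapsto \Phi^\theta f(F^\theta)$ on compact subintervals of $I$. The hypothesis that $\int_0^T D_tF^\theta h_t \d t\neq 0$ on $\{\partial_\theta F^\theta\neq 0\}$ is precisely what prevents $u^\theta$ from being a formal $0/0$, and the assumption $u^\theta\in \mathrm{Dom}(\delta)$ legitimises the final application of the duality \eqref{malliavinduality}.
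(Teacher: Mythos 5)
Your argument is correct and is essentially the paper's own proof: differentiate under the expectation via the product and chain rules, rewrite $f'(F^\theta)$ using $D_t(f(F^\theta))=f'(F^\theta)D_tF^\theta$ together with division by $\int_0^T D_tF^\theta h_t\d t$, and conclude by the duality \eqref{malliavinduality}. Your added remarks on justifying the interchange of $\partial_\theta$ and $\E$ go slightly beyond what the paper writes out, but the route is the same.
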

\begin{proof}
The proof follows from \citep[Prop. 6.2.1]{Nualart06} and its extended form from \cite[Prop. 5]{Yolcu18}. From the chain rule we have 

$$f^{\prime}(F^{\theta})=\frac{\int_0^T D_s f(F^\theta) h_s \d s}{\int_0^T D_s F^\theta h_s \d s}.$$ 

Then, using the duality between $D$ and $\delta$ and using the linearity of the expectation we have 

\begin{align*}
\partial_\theta \E[\Phi^\theta f(F^\theta)]
&= {\E}[f^{\prime}(F^{\theta}) \Phi^{\theta}  \partial_{\theta} F^{\theta}] + {\E}[f(F^\theta)\partial_{\theta} \Phi^\theta]  \\
&= {\E}\left[\frac{\int_0^T  D_t f(F^\theta) h_t \d t}{\int_0^T D_t F^\theta h_t \d t}\cdot \Phi^\theta \cdot \partial_{\theta} F^{\theta}\right]+ {\E}[f(F^\theta)\partial_{\theta} \Phi^\theta] \\
&= {\E}\left[f(F^{\theta}) \delta\left(\frac{\partial_{\theta} F^{\theta}\cdot  h\cdot \Phi^{\theta}}{\int_0^T D_t F^\theta h_t \d t}\right)\right] + {\E}[f(F^\theta)\partial_{\theta} \Phi^\theta]
\end{align*}
and this proves the result. 
\end{proof}

This integration by parts formula, however, it is not enough for our purposes. The following proposition is an extension of the previous formula for a more general class of functionals.

\begin{proposition} \label{p: IBP extension}
    Let $I$ be an open interval of ${\R}$. Let $\{ F^{\theta}, \theta \in I\}$ and $\{\Phi^{\theta}, \theta \in I\}$ be two families of functionals on $\D^{1,1}$ and $\D^{1,2}$ respectively, both continuously differentiable with respect to $\theta \in I$. Assume as well that there exists $h \in \mathbb{L}^{1, \infty}$ such that 
    \begin{equation}
    \int_0^T D_tF^{\theta} h_t\d t\neq 0 \text{ a.s. on } \{ \partial_{\theta}F^{\theta} \neq 0\} 
    \end{equation}
    and 

    \begin{equation}
    \frac{\Phi^{\theta}\partial_{\theta}F^{\theta} h}{ \int_0^T D_tF^{\theta} h_t \d t}
    \end{equation}
   belongs to $\Dom(\delta)$ and is continuous in $\theta$. Then,
    \begin{equation}
        \partial_{\theta} \E \left[ \Phi^{\theta}f(F^{\theta})\right] = \E \left[ f(F^{\theta})\cdot \delta \left( \frac{\Phi^{\theta}\partial_{\theta} F^{\theta} h}{\int_0^T D_tF^{\theta} h_t \d t}\right) \right] + \E[f(F^{\theta}) \partial_{\theta}\Phi^{\theta}]
    \end{equation}
    for any function $f \in \mathcal{C}^1_b(\R).$
\end{proposition}

\begin{proof}
    The idea of the proof is to approximate $F^{\theta}$ by a sequence of functionals in $\D^{1,2}$, applying the integration by parts formula (Theorem \ref{t:d_theta}) to the approximating sequence and then conclude the result taking limits. Consider then, for each $\theta \in I$, a sequence $\{F^{\theta}_{n}, n \geq 1\} \subseteq \D^{1,2}$ such that $||F^{\theta}_n - F^{\theta}||_{1,1} \to 0$ as $n \to \infty$. Notice that, by replacing $\{F^{\theta}_n, n \geq 1\}$ by a proper subsequence, we can assume without loss of generality that $F^{\theta}_n$ and $D_tF^{\theta}_n$ converge to $F^{\theta}$ and $D_tF^{\theta}$ almost surely for every $t \in [0,T]$. Using the chain rule for the approximating sequence we have that
    \begin{equation} \label{e: IBP approximated}
         \int_0^T D_t f(F^{\theta}_n)h_t \d t = f'(F_n^{\theta})\int_0^T D_t F^{\theta}_n h_t \d t.
    \end{equation}
    Now from the fact that $h \in \mathbb{L}^{1, \infty}$, $f'$ is bounded and $\E \left[||DF^{\theta}_n - DF^{\theta}||_{L^2[0,T]}\right] \to 0$ as $n \to \infty$, equation \eqref{e: IBP approximated} converges in $L^1(\Omega)$ to
    
    \begin{equation}
       \int_0^T D_tf(F^{\theta})h_t \d t= f'(F^{\theta})\int_0^T D_t F^{\theta} h_t \d t
    \end{equation}
    
    or, equivalently,

    \begin{equation}
         f'(F^{\theta}) = \frac{\int_0^T D_tf(F^{\theta})h_t \d t}{\int_0^T D_t F^{\theta} h_t \d t}.
    \end{equation}
    Now the proof concludes in the same way as in Theorem \ref{t:d_theta}.
\end{proof}

Notice that the previous result is useful provided such a process $h$ exists. In Section 4 we prove that in practical cases such an $h$ always exists and it can be chosen so that $||\1_{[0,T]} - h_t||_{1,1}$ is arbitrarily small.

\subsection{Rough Gaussian Volterra processes}\label{ssec:rVolterra_processes}

A Gaussian Volterra process is defined as a process $Y = \{Y_t, t\geq 0\}$ that can be represented as 

\begin{equation} \label{e:Volterra}
Y_t = \int_0^t K(t,s)\,\d W_s,
\end{equation}
where $W$ is a standard Brownian motion, $K(t,s)$ is a kernel

$$K:[0,T]\times [0,T]\longrightarrow {\mathbb R}$$
such that the following three integrals

$$\int_0^T \int_0^t K(t,s)^2 \d s \d t, \quad \int_0^t K(t,s)^2 \d s, \quad \int_s^T K(t,s)^2 \d t$$
are finite, and for any $t\in [0,T],$

\begin{equation}
\F^Y_t = \F^W_t. \tag{A2}\label{A2}
\end{equation}

By $r(t,s)$ we denote the autocovariance function of $Y_t$ and by $r(t):=r(t,t)$ its variance, that is, 

\begin{align}
r(t,s) &:= \E[Y_t Y_s], \quad t,s\geq 0, \notag \\
r(t) &:= \E[Y_t^2], \quad t\geq 0. \label{e:r}
\end{align}

Different kernels $K$ give different Gaussian Volterra processes. The most famous example is the \emph{standard fractional Brownian motion} (fBm) that is a process with stationary increments and that corresponds to the representation 

\begin{equation}\label{e:fBm}
B_t^H = \int_0^t K_H(t,s)\,\d W_s,
\end{equation}
where $K_H(t,s)$ is a quite complicated kernel that depends on the so-called Hurst parameter $H \in (0,1)$. See for example \cite{Nualart06} (Chapter 5).

The autocovariance function of $B^{H}_{t}$ is given by
\begin{equation}\label{e:fBmCov}
r(t,s):=\E[B^H_t B^H_s] = \frac12 \left( t^{2H} + s^{2H} - |t-s|^{2H}\right), \quad t,s\geq 0,
\end{equation}
and in particular $r(t):=r(t,t) = t^{2H}$, $t\geq0$. For $H=1/2$, fBm is the standard Brownian motion, for $H>1/2$, the increments are positively correlated and the trajectories are more \emph{regular} than Brownian ones, and for $H<1/2$, the increments are negatively correlated and the trajectories are more \emph{rough} than Brownian ones. The term \emph{rough fractional model} therefore refers to the case where $H<1/2$ is considered in the model.

Other kernels $K_H (t,s)$ give the autocovariance function (\ref{e:fBmCov}) despite they loose the property of stationary increments. In the present paper, for modeling purposes, and following many references in the literature, see for example \cite{Alos00}, we consider the so called simplified Riemann-Liouville kernel

\begin{equation}\label{e:K_H}
K_H(t,s) := \sqrt{2H} (t-s)^{H-1/2}.
\end{equation}

Recall that neither the standard fBm nor the simplified Riemann-Liouville process are semimartingales, see for example \cite{Thao06}. This motivates the use of the so-called \emph{approximate fractional Brownian motion} (afBm), i.e. a Gaussian Volterra process with kernel

\begin{equation}\label{e:K_H_eps}
{K}_H(t,s) := \sqrt{2H}(t-s+\varepsilon)^{H-1/2}, \quad \varepsilon\geq 0, \quad H\in(0,1). 
\end{equation}
For every $\varepsilon>0$ such a process is a semimartingale and as $\varepsilon$ tends to zero it converges to the simplified Riemann-Liouville process in the $L^2-$norm, uniformly in $t\in [0,T];$ see \cite{Thao06}. In this case

\begin{align}
r(t,s) &= \int_0^{t\wedge s} {K}_H(t,v) {K}_H(s,v) \d v, \notag \\
r(t)   &= \int_0^t {K}_H^2(t,v) \d v = 2H \int_0^t (t-v+\varepsilon)^{2H-1} \d v = (t+\varepsilon)^{2H} - \varepsilon^{2H}, \notag
\end{align}
Note that if $\varepsilon=0$, we get exactly the variance $r(t)=t^{2H}.$

The following quantities will be useful later: 
$$\kappa_t := \int_0^t {K}_H(t,s)\d s = \sqrt{2H}\int_0^t (t-s+\varepsilon)^{H-1/2} \d s,$$
$$ \frac{\partial}{\partial H} {K}_H(t,s) = {K}_H(t,s)\cdot\left(\frac1{2H}+\ln(t-s+\varepsilon) \right), $$
and
\[
\kappa_t' = \int_0^t \partial_H K_H(t,s) \d s.
\]
Note that, if $H=1/2$ we have $\kappa_t=t.$

To avoid confusion, all theoretical calculations will be provided with a general kernel $K_H(t,s)$ and for numerical purposes, the kernel \eqref{e:K_H_eps} with a suitable value of $\varepsilon$ (typically $10^{-6}$) will be considered.

\section{Methodology for computation of Greeks}\label{sec:methodology} 

In this section we present the main theoretical result, the formulas for Greeks in the general SV model (Section \ref{ssec:General_SV}). Depending on the particular choice of the volatility process, two classes of rough Volterra SV models will be considered (Section \ref{ssec:rVolterra_SV}).

\subsection{General SV model}\label{ssec:General_SV}

We assume a general SV model on a filtered probability space generated by two independent Brownian motions $W$ and $\widetilde W$, under a risk neutral measure. The price process is assumed to follow the equation 

\begin{equation}\label{e:dS_t}
\d S_t = r S_t \d t + \sigma(V_t) S_t \d W_t, \, t\in [0,T],
\end{equation}
where $r\geq 0$ is the constant instantaneous interest rate, $\sigma\in C^2(\mathbb R)$ and $V$ denotes the volatility or the variance process that we assume adapted to the completed filtration generated by $W$ and $\widetilde W.$ Processes $S$ and $V$ are assumed to have continuous trajectories and such that $S_t$ and $V_t$ belong to ${\D}^{2,1}$ and ${\D}^{2,2}$ respectively for any $t\in [0,T].$ Here ${\D}^{2,p}$ is the Sobolev space defined in the previous section associated to the Malliavin derivative with respect process $W$, the driving process of the price process.  Note that $V_t$ can be seen as a functional of $W$ depending on an independent source of randomness $\widetilde W$ that can be treated as deterministic, see \cite{Nualart06} (page 31).

Note that the solution of the price process is given by 

\begin{equation}\label{e:S_t}
S_t=S_0\exp\left\{\int_0^t \left(r-\frac{1}{2} \sigma^2(V_s)\right)\d s+\int_0^t \sigma(V_s)\d W_s\right\}, \, t\in [0,T].
\end{equation}

The following lemmas give the Malliavin derivatives of $S_T$ in terms of the Malliavin derivatives of $V_t.$ All the proofs are quite straightforward using that the Itô integral in \eqref{e:S_t} can be seen as a Skorohod integral and using Proposition~\ref{deriv-int}. Similar computations can be found for example in \cite{El-Khatib09} and \cite{Yolcu18}.

Similarly as in \cite{Yolcu18} (Def. 3, Prop. 6), we have the following result 

\begin{lemma}\label{l:DS}
The Malliavin derivative of $S_T$ with respect to $W$ is 
$$D_tS_T=S_T G(t,T)$$ 
where 
\begin{equation}\label{gformula}
G(t,T)=\sigma(V_t)+\int_t^T \sigma^{\prime}(V_s) D_tV_s \d W_s-\int_t^T \sigma(V_s)\sigma^{\prime}(V_s) D_tV_s \d s.
\end{equation}
\end{lemma}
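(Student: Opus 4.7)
The plan is to write $S_T = S_0 \exp(X_T)$ with
$$X_T := \int_0^T\!\Big(r - \tfrac12 \sigma^2(V_s)\Big) \d s + \int_0^T \sigma(V_s)\,\d W_s,$$
apply the Malliavin chain rule to reduce everything to $D_t S_T = S_T\, D_t X_T$, and then differentiate each of the two summands of $X_T$ separately. Thus it will suffice to identify $D_t X_T$ with the expression $G(t,T)$ stated in \eqref{gformula}.

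First, I would treat the finite variation part: since $r$ is constant, pulling $D_t$ inside the Riemann integral and applying the chain rule from Section~\ref{ssec:Malliavin} to $\sigma^2(V_s)$ gives
$$D_t \int_0^T\!\Big(r-\tfrac12 \sigma^2(V_s)\Big)\d s = -\int_0^T \sigma(V_s)\sigma'(V_s) D_t V_s\,\d s.$$
Second, for the stochastic integral, the adaptedness of $\sigma(V_\cdot)$ to the filtration generated by $W$ and $\widetilde W$ together with square-integrability means the Itô integral coincides with the Skorohod integral $\delta(\sigma(V_\cdot)\1_{[0,T]})$, as recorded in \eqref{e:delta}. Proposition~\ref{deriv-int} then yields
$$D_t\!\int_0^T \sigma(V_s)\,\d W_s = \sigma(V_t) + \int_0^T D_t\sigma(V_s)\,\d W_s = \sigma(V_t) + \int_0^T \sigma'(V_s) D_t V_s\,\d W_s,$$
after a further application of the chain rule inside the integrand.

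Third, because $V$ is adapted to the filtration generated by $(W,\widetilde W)$, one has $D_t V_s = 0$ for $s<t$, so both integrals above can be truncated to $[t,T]$. Adding the two contributions and multiplying by $S_T$ reproduces exactly $S_T\,G(t,T)$, which is the claim.

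The main technical point — which is where I expect any real work to sit — is the verification of the hypotheses of Proposition~\ref{deriv-int}, namely that $\sigma(V_s)\in\mathbb{D}^{1,2}$ for every $s\in[0,T]$, that the two-parameter process $\{D_t\sigma(V_s),\,s\in[0,T]\}$ is Skorohod integrable, and that $\{\int_0^T D_t\sigma(V_s)\,\d W_s,\,t\in[0,T]\}$ admits an $L^2(\Omega\times[0,T])$ version. Under the standing assumption that $V_t\in\mathbb{D}^{2,2}$ and with sufficient smoothness and boundedness of $\sigma$ and $\sigma'$ on the range of $V$ (possibly after a localization argument, as is standard in this setting and used in \cite{El-Khatib09,Yolcu18}), these hypotheses are routine; once granted, the computation above is direct.
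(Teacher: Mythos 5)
Your proof is correct and takes essentially the same route as the paper's: the paper's own proof is just the one-line remark that the result is ``a straightforward computation of the Malliavin derivative of \eqref{e:S_t} using Proposition~\ref{deriv-int}'', which is precisely the computation you carry out (chain rule for the exponential, Proposition~\ref{deriv-int} applied to the It\^{o}/Skorohod integral, and adaptedness of $V$ to truncate both integrals to $[t,T]$). Your closing paragraph on verifying the hypotheses of Proposition~\ref{deriv-int} supplies detail the paper leaves implicit but does not alter the argument.
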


\begin{proof}
The proof is a straightforward computation of the Malliavin derivative of \eqref{e:S_t} using Proposition~\ref{deriv-int}.
\end{proof}

\begin{lemma}\label{l:DG}
The Malliavin derivative of $G(t,T)$ given above is 
\begin{align*}
D_s G(t,T)
&=\sigma^{\prime}(V_{s\vee t})D_{s\wedge t}V_{s\vee t}\\
&\quad+\int_{s\vee t}^T \sigma^{\prime\prime}(V_u) D_s V_u D_t V_u \d W_u\\
&\quad+\int_{s\vee t}^T \sigma^{\prime}(V_u) D_s D_t V_u \d W_u\\
&\quad-\int_{s\vee t}^T  (\sigma^{\prime}(V_u))^2 D_s V_u D_t V_u \d u\\
&\quad-\int_{s\vee t}^T \sigma^{\prime\prime}(V_u)\sigma(V_u) D_s V_u D_t V_u \d u\\
&\quad-\int_{s\vee t}^T \sigma^{\prime}(V_u)\sigma(V_u) D_s D_t V_u \d u.
\end{align*}
\end{lemma}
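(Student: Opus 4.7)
The plan is to apply the Malliavin derivative $D_s$ term by term to the three summands in the definition of $G(t,T)$ and then to recombine the results. First, for the term $\sigma(V_t)$, the chain rule yields $D_s\sigma(V_t)=\sigma'(V_t)D_sV_t$. I would then note that, because $V$ is adapted to the joint filtration generated by $W$ and $\widetilde W$, the derivative $D_sV_t$ (taken with respect to $W$) vanishes whenever $s>t$, so this contribution is $\sigma'(V_{s\vee t})D_{s\wedge t}V_{s\vee t}$ only in the case $s\le t$ and is zero otherwise.

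Second, for the Skorohod integral $\int_t^T\sigma'(V_u)D_tV_u\,dW_u$, I would invoke Proposition~\ref{deriv-int}, which produces a boundary term $\sigma'(V_s)D_tV_s\,\mathbf{1}_{[t,T]}(s)$ plus an interior term $\int_t^TD_s[\sigma'(V_u)D_tV_u]\,dW_u$. The boundary contribution is non-zero precisely in the regime $s>t$, and in that case it equals $\sigma'(V_{s\vee t})D_{s\wedge t}V_{s\vee t}$, thereby combining with the contribution from $\sigma(V_t)$ above to give a single unified first term on the right-hand side of the claim. In the interior term, the chain rule (applied to the product $\sigma'(V_u)D_tV_u$) yields the two Skorohod integrals involving $\sigma''(V_u)D_sV_uD_tV_u$ and $\sigma'(V_u)D_sD_tV_u$ respectively.

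Third, for the Lebesgue integral $\int_t^T\sigma(V_u)\sigma'(V_u)D_tV_u\,du$, I would differentiate under the integral sign and apply the chain rule to the product $\sigma(V_u)\sigma'(V_u)D_tV_u$. Using $(\sigma\sigma')'=(\sigma')^2+\sigma\sigma''$ this gives the three $du$-integrals with a minus sign, namely those with integrands $(\sigma'(V_u))^2D_sV_uD_tV_u$, $\sigma''(V_u)\sigma(V_u)D_sV_uD_tV_u$ and $\sigma'(V_u)\sigma(V_u)D_sD_tV_u$.

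Finally, I would observe that in every one of the four integrals produced above the integrand contains a factor of the form $D_sV_u$ or $D_sD_tV_u$, both of which vanish for $u<s$ by adaptedness. Hence the lower limit of integration can be replaced by $s\vee t$ in each case, yielding the stated identity uniformly in $s$ and $t$. The main technical nuisance is precisely this case-splitting $s\le t$ versus $s>t$ and ensuring that the boundary term from Proposition~\ref{deriv-int} combines cleanly with the chain-rule derivative of $\sigma(V_t)$; everything else is a mechanical application of the chain rule and the commutation rule for $D$ with Skorohod/Lebesgue integrals, for which the required integrability of $DV$ and $D^2V$ is built into the standing assumption $V_t\in{\D}^{2,2}$.
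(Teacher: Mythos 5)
Your proposal is correct and follows exactly the route the paper intends: the paper's own proof is the one-line remark that the result ``follows straightforwardly from Proposition~\ref{deriv-int}'', and your term-by-term differentiation of \eqref{gformula} --- chain rule on $\sigma(V_t)$, Proposition~\ref{deriv-int} on the Skorohod integral with its boundary term merging into $\sigma'(V_{s\vee t})D_{s\wedge t}V_{s\vee t}$, differentiation under the Lebesgue integral, and the adaptedness argument shifting all lower limits to $s\vee t$ --- is precisely the computation being elided. No substantive difference in approach; yours is simply the written-out version.
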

\begin{proof}
The proof follows straightforwardly from Proposition~\ref{deriv-int}.
\end{proof}

Note that $V$ is an adapted process and then $D_t V_s$ is null if $t>s$ and $D_s D_t V_u$ is null if $s$ or $t$ is greater than $u$.
\vspace{0.5cm}

Let $\theta$ be a model parameter. Our goal is to calculate the corresponding Greek, in particular the partial derivative with respect to it, see Definition~\ref{d:greeks}. For example from \eqref{e:S_t} and using Fubini's theorem we may write
\begin{equation*}
\partial_{\theta} S_T^\theta = S_T^\theta\cdot\left\{\int_0^T b(V_s)\d s +\int_0^T a(V_s)\d W_s\right\},
\end{equation*}
where we denoted
\begin{equation}\label{e:ab}
a(V_s) = \partial_{\theta} \left(\sigma(V_s)\right)\qquad\text{ and }\qquad b(V_s)= \partial_{\theta} \left(r-\frac12 \sigma^2(V_s) \right).
\end{equation}
Direct consequence of the IBP formula applied to a suitable process $h$ and Lemma \ref{l:DS} then gives us the general Greek formula
\begin{align}
\partial_\theta \E_{\Q}[\e^{-rT} f(S_T^\theta)] 
&= \e^{-rT} \E_{\Q}\left[ f(S_T^\theta)\cdot \delta\left( \frac{\partial_\theta S_T^\theta \cdot h}{\int_0^T S_T^\theta G(u,T) h_u \d u}\right) \right] + \E_{\Q}\left[f(S_T^\theta)\cdot\partial_{\theta}\left(\e^{-rT}\right)\right] \notag \\
&= \e^{-rT} \E_{\Q}\left[ f(S_T^\theta)\cdot \delta(c(V, h)) \right] + \E_{\Q}\left[f(S_T^\theta)\cdot\partial_{\theta}\left(\e^{-rT}\right)\right], \label{e:general_Greek}
\end{align}
where we denoted
\begin{equation}\label{e:c}
c(V, h) = \dfrac{\int_0^T b(V_s)\d s +\int_0^T a(V_s)\d W_s}{\int_0^T G(u,T)h_u\d u}\cdot h.
\end{equation}

In the following theorem we show that the computation of quantities  
\begin{equation}\label{e:iint_DsG}
\int_0^T\int_s^T D_s G(t,T) \d t \d s
\end{equation}
and 
\begin{equation}\label{e:int_G}
\int_0^T G(t,T) \d t
\end{equation}
that depend on the model chosen for the volatility process 
is one of the the essential problems in order to find Malliavin weights and to compute the considered Greeks. More concretely, the two objects depending on the model are $G(t,T)$ and $D_s G(t,T)$ with $s\leq t$.

\begin{theorem}\label{t:greeks_general}
Let $G(t,T)$ be defined as in \eqref{gformula}. Then the formulas for Greeks Delta, Gamma, Rho, Vega and the derivative with respect to $H$ respectively are

\begin{align*}
\Delta 
&= \E_{\Q}\left[ f(S_T)\, \delta\left( \frac{\e^{-rT} \partial_{S_0} S_T\cdot h}{\int_0^T D_u S_T h_u \d u}\right) \right]\\
&= \frac{\e^{-rT}}{S_0}\E_{\Q}\left[f(S_T) \delta \left(\frac{h}{\int_0^T G(u,T) h_u \d u} \right)\right], \\
\Gamma &= \partial_{S_0} \frac{\e^{-rT}}{S_0}\E_{\Q} \left[ f(S_T) \delta \left(\frac{h}{\int_0^T G(u,T) h_u \d u}  \right) \right] \\
&= -\frac{\e^{-rT}}{S_0^2}\E_{\Q} \left[ f(S_T) \delta \left(\frac{h}{\int_0^T G(u,T) h_u \d u} \right) \right] \\
& +  \frac{\e^{-rT}}{S_0^2}\E_{\Q} \left[ f(S_T) \delta \left( \frac{ h}{\int_0^T G(u,T)h_u \d u} \delta \left(\frac{ h}{\int_0^T G(u,T)h_u \d u} \right)\right) \right] \\
\varrho &= \E_{\Q}\left[ f(S_T)\,   \delta\left( \frac{\e^{-rT} \partial_r S_T \cdot h  }{\int_0^T D_u S_T h_u   \d u}\right) \right] 
- \E_{\Q}\left[ f(S_T)\e^{-rT} T \right], \\
&= -T\e^{-rT} \E_{\Q}[f(S_T)] + T\e^{-rT}\E_{\Q}\left[ f(S_T) \delta \left( \frac{h}{\int_0^T G(u,T) h_u \d u})\right)\right], \\
\mathcal{V} &= \E_{\Q} \left[ f(S_T) \delta \left( \frac{\e^{-rT} \partial_{V_0}S_T \cdot h}{\int_0^T D_u S_T h_u \d u} \right) \right] =\e^{-rT} \E_{\Q}[f(S_T) \cdot \delta(c_{V_0}(V,h))], \\
\mathcal{H} &= \E_{\Q}\left[ f(S_T) \delta \left( \frac{\e^{-rT} \partial_{H}S_T \cdot h}{\int_0^T D_u S_T h_u \d u} \right) \right] =\e^{-rT} \E_{\Q}[f(S_T) \cdot \delta(c_{H}(V,h))].
\end{align*}
\end{theorem}
\begin{proof}
    We will show a formula for a general Greek of the form
    \[
    \partial_{\theta} \E_{\Q}[f(F^{\theta}) \Phi^{\theta}].
    \]
    The conclusion of the theorem will follow by replacing $F^{\theta} = S_T$ and $\Phi^{\theta} = \e^{-rT}$. 
    
    Without loss of generality, assume that we are working under the canonical probability space, i.e. $\Omega = \mathcal{C}_0([0,T])$ is the space of continuous functions in $[0,T]$ vanishing at the origin endowed with the supremum norm $||\cdot||_{0}$, thus making $\Omega$ a Banach space. In other words, for $\omega \in \Omega$, $||\omega||_0 = \sup_{t \in [0,T]} |\omega(t)|$. Let $\mathcal{F}$ the Borel $\sigma$-field of $\Omega$ and let $P$ be the probability measure such that the canonical process $W_t(\omega) = \omega(t)$ is a Brownian motion. 
    
    Given two sequences of positive real numbers $\{\delta_n ; n \geq 1\}$, $\{M_n, n \geq 1\}$ such that $\delta_n \to 0$ in a monotonically decreasing way and $M_n \to \infty$ in a monotonically increasing way, we can define the sets $A^{j}_n$ for $j = 1, 2,3, 4, 5$ and $A_n^{\theta}$ as follows:
    \begin{align*}
        &A^{1. \theta}_n = \left \{ \omega \in \Omega;  \sup_{\omega' \in \Omega; ||\omega'||_0 \leq 1/n}\delta_n <  F^{\theta}(\omega + \omega') < M_n \right \}, \\
        &A^{2, \theta}_n = \left \{ \omega \in \Omega;  \sup_{\omega' \in \Omega; ||\omega'||_0 \leq 1/n}\delta_n < |\partial_{\theta} F^{\theta}(\omega + \omega')| < M_n \right \}, \\
        &A^{3,\theta}_n = \left \{ \omega \in \Omega;  \sup_{\omega' \in \Omega; ||\omega'||_0 \leq 1/n}\delta_n < \left| \int_0^T D_t F^{\theta}(\omega + \omega') \mathrm{d} t \right| < M_n \right \}, \\
        &A^{4,\theta}_n = \left \{ \omega \in \Omega;  \sup_{\omega' \in \Omega; ||\omega'||_0 \leq 1/n}\delta_n < \left| \int_0^T D_t \partial_{\theta} F^{\theta}(\omega + \omega') \mathrm{d} t \right| < M_n \right \}, \\
        &A^{5, \theta}_n =  \left \{ \omega \in \Omega;  \sup_{\omega' \in \Omega; ||\omega'||_0 \leq 1/n}\delta_n < \left| D_t\int_0^T D_s F^{\theta}(\omega + \omega') \mathrm{d} s \right| < M_n  \text{ for all } t\in [0,T] \right\}, \\
        & A^{\theta}_n = A^{1, \theta}_n \cap A^{2, \theta}_n \cap A^{3, \theta}_n \cap A^{4, \theta}_n \cap A^{5,\theta}_n.
    \end{align*}
    We also define the random variable
    \[
     \rho_{A^{\theta}_n}(\omega) = \inf \{||\omega'||_0; \omega + \omega' \in A^{\theta}_n \}.
    \]
    Consider the smooth function $\phi(t) \in \mathcal{C}^{\infty}_0(\R)$ satisfying that $\phi(t) = 1$ in $|t| \leq \frac{1}{3}$, $\phi(t) = 0$ if $|t| \geq \frac{2}{3}$ and $|\phi'(t)| \leq 4$ for all $t \in \R$. Then, we define the sequence of random variables 
    \[
    F_n^{\theta} = F^{\theta} \phi(n \rho_{A^{\theta}_n}).
    \]
    The first claim we want to show is that $\partial_{\theta} \phi(n \rho_{A^{\theta}_n}(\omega)) = 0$ almost surely. To see this, fix $\theta_0$ and assume for the sake of simplicity that there exists $\omega' \in \Omega$ such that $||\omega'||_0 = \rho_{A_n^{\theta_0}}(\omega)$ (the general claim follows by replacing the existence of $h$ by the existence of a sequence of elements $\{\omega'_k; k \geq 1\}$ such that $||\omega'_k||_0 \to \rho_{A_n^{\theta_0}}(\omega)$ but the argument is identical). Now, since $||\omega'||_0 = \rho_{A_n^{\theta_0}}(\omega)$, it holds that $\omega + \omega' \in A_n^{\theta_0}$. Now, due to the definition of the set $A_n^{\theta_0}$ and the continuity with respect to $\theta$ of all the random variables involved  we conclude that $\omega + \omega' \in A_n^{\theta}$ for all $\theta$ in a neighborhood of $\theta_0$. Since $\rho_{A_n^{\theta_0}}$ is defined as an infimum we deduce that
    \[
    \rho_{A_n^{\theta_0}}(\omega) \geq \rho_{A_n^{\theta}}(\omega)
    \]
    for all $\theta$ in a neighborhood of $\theta_0$. Since the choice of $\theta_0$ is arbitrary, we conclude that the function $\rho_{A_n^{\theta}}(\omega)$ is locally constant almost surely and therefore $\partial_{\theta} \rho_{A_n^{\theta}}(\omega) = 0$ for every $\omega \in \Omega$.
    
    Then, by the hypothesis on $F^{\theta}$ and following the same argument as in \cite[p. 231]{Nualart06} we have that $A_n^{\theta} \uparrow \Omega$ as $n \to \infty$, $F_n^{\theta} = F^{\theta}$, $\partial_{\theta} F^{\theta}_n =  \partial_{\theta} F^{\theta}$ on $A_n^{\theta}$ and $\partial_{\theta} F^{\theta}_n \in \mathbb{D}^{1,2}$, $F^{\theta}_n \in \mathbb{D}^{2,2}$. Consider now a localized Greek of the form
    \[
    \partial_{\theta} \E_{\Q} \left[ f(F^{\theta}) \Phi^{\theta} \phi(n\rho_{A^{\theta}_n}) \right] = \E_{\Q} \left[ f'(F^{\theta}) \partial_{\theta} F^{\theta} \Phi^{\theta} \phi(n \rho_{A_n^{\theta}})\right] + \E_{\Q} \left[f(F^{\theta}) \partial_{\theta} \Phi^{\theta} \phi(n\rho_{A_n^{\theta}}) \right].
    \]
    Applying the IBP formula in Theorem \ref{t:d_theta} to the first term we obtain that
    \[
    \partial_{\theta} \E_{\Q} \left[ f(F^{\theta}) \Phi^{\theta} \phi(n\rho_{A^{\theta}_n}) \right] = \E_{\Q} \left[ f(F^{\theta}) \delta\left( \frac{\partial_{\theta} F^{\theta}  \Phi^{\theta} \phi(n \rho_{A_n^{\theta}})h}{\int_0^T D_tF^{\theta} h_t \mathrm{d} t}\right)\right] + \E_{\Q} \left[f(F^{\theta}) \partial_{\theta} \Phi^{\theta} \phi(n\rho_{A_n^{\theta}}) \right].
    \]
    Notice that, by definition of $\phi(n\rho_{A_n^{\theta}})$, all the terms in the right-hand-side of the equation are well defined. We now will deduce the general formula for the Greeks using a limiting procedure. On the one hand, since $\phi(n \rho_{A_n^{\theta}}) \to 1$ almost surely and it is bounded by $1$ we obtain that
    \begin{align*}
        \lim_{n \to \infty} &\E_{\Q} \left[ f'(F^{\theta}) \partial_{\theta} F^{\theta} \Phi^{\theta} \phi(n \rho_{A_n^{\theta}})\right] + \E_{\Q} \left[f(F^{\theta}) \partial_{\theta} \Phi^{\theta} \phi(n\rho_{A_n^{\theta}}) \right]\\
        = &\E_{\Q} \left[ f'(F^{\theta}) \partial_{\theta} F^{\theta} \Phi^{\theta} \right] + \E_{\Q} \left[f(F^{\theta}) \partial_{\theta} \Phi^{\theta}  \right] \\
        = & \partial_{\theta} \E[f(F^{\theta}) \Phi^{\theta}].
    \end{align*}
    Moreover, since $\phi(n\rho_{A_n^{\theta}}) \to 1$, almost surely, the limit candidate (if it exists) of the sequence
    \[
    \E_{\Q} \left[ f(F^{\theta}) \delta\left( \frac{\partial_{\theta} F^{\theta}  \Phi^{\theta} \phi(n \rho_{A_n^{\theta}})h}{\int_0^T D_tF^{\theta} h_t \mathrm{d} t}\right)\right]
    \]
    is
    \[
    \E_{\Q} \left[ f(F^{\theta}) \delta\left( \frac{\partial_{\theta} F^{\theta}  \Phi^{\theta} h}{\int_0^T D_tF^{\theta} h_t \mathrm{d} t}\right)\right].
    \]
    Now, using that
    \[
    \E_{\Q} \left[ f(F^{\theta}) \delta\left( \frac{\partial_{\theta} F^{\theta}  \Phi^{\theta} \phi(n \rho_{A_n^{\theta}})h}{\int_0^T D_tF^{\theta} h_t \mathrm{d} t}\right)\right] = \E_{\Q} \left[ f'(F^{\theta}) \partial_{\theta} F^{\theta} \Phi^{\theta} \phi(n \rho_{A_n^{\theta}})\right]
    \]
    and recalling the fact that the right-hand-side sequence is convergent we find that the left-hand-side sequence must be convergent. By uniqueness of the limit of sequences, we conclude that
    \[
    \partial_{\theta} \E_{\Q} \left[ f(F^{\theta}) \Phi^{\theta}  \right] =  
    \E_{\Q} \left[ f(F^{\theta}) \delta\left( \frac{\partial_{\theta} F^{\theta}  \Phi^{\theta} h}{\int_0^T D_tF^{\theta} h_t \mathrm{d} t}\right)\right] + \E_{\Q} \left[f(F^{\theta}) \partial_{\theta} \Phi^{\theta}  \right]
    \]
    as desired.
\end{proof}

\begin{remark}
    In general, finding bounds for the Greeks which are independent of the model is not an easy problem. However, for the $\Delta$ case one can easily derive the estimate $0 \leq \Delta \leq 1$ for Call options and $-1 \leq \Delta \leq 0$ for Put options. Indeed, assume the payoff is $f(S_T) = (S_T-K)^+$. Notice that we can write $S_T$ as
    \[
    S_T = S_0\exp\left( rT - \int_0^T \frac{\sigma(V_s)^2}{2} \d s + \int_0^T \sigma(V_s) \d W_s \right) =: S_0 \mathcal{E}_T.
    \]
    Since $\mathcal{E}_T$ does not depend on $S_0$, clearly $\partial_{S_0} S_T = \mathcal{E}_T = \frac{S_T}{S_0}$ a.s. Now, by definition, the Delta can be computed as
    $$
    \Delta = \partial_{S_0}\E_{\Q}\left[e^{-rT}(S_T - K)^{+} \right]
    $$
    where
    $$
    (S_T - K)^{+} = 
    \begin{cases}
    S_T - K & S_0 \geq \frac{K}{\mathcal{E}_T}, \\
    0 & S_0 < \frac{K}{\mathcal{E}_T}.
    \end{cases}
    $$
    Hence,
    $$
    \partial_{S_0}(S_T-K)^{+} = \partial_{S_0}S_T \1_{\{S_0 \geq \frac{K}{\mathcal{E}_T}\}} = \frac{S_T}{S_0}\1_{\{S_0 \geq \frac{K}{\mathcal{E}_T}\}} =  \frac{S_T}{S_0}\1_{\{S_T \geq K\}}.
    $$
    Plugging this into the definition of the Delta we have
    $$
    \Delta = \frac{1}{S_0} \E_{\Q}\left[ e^{-rT} S_T \1_{\{S_T \geq K\}}\right]
    $$
    Let's proceed with the estimates. On the one hand, since $S_T$ and $S_0$ are positive, we clearly have
    $$
    \Delta = \frac{1}{S_0} \E\left[ e^{-rT} S_T \1_{\{S_T \geq K\}}\right] \geq 0.
    $$
    On the other hand, since $\1_{\{S_T \geq K\}} \leq 1$, we have
    $$
    \Delta = \frac{1}{S_0} \E_{\Q}\left[ e^{-rT} S_T \1_{\{S_T \geq K\}}\right] \leq \frac{1}{S_0} \E_{\Q}\left[ e^{-rT} S_T \right] = \frac{S_0}{S_0} = 1
    $$
    because discounted stock prices are martingales. As a conclusion,
    $$
    0 \leq \Delta \leq 1
    $$
    for a Call option. Doing the same procedure for $(K-S_T)^{+}$ we find that, for a Put option,
    $$
    -1 \leq \Delta \leq 0.
    $$
\end{remark}
Until now we did not have to specify a particular volatility process. From now on we will consider several particular cases of rough Volterra volatility processes.

\subsection{Rough Volterra SV models}\label{ssec:rVolterra_SV}

As before, we assume our price process $S=\{S_{t}, t\in[0,T]\}$ is a strictly positive process under a market chosen risk neutral probability measure $\Q $ that follows the model

\begin{eqnarray}\label{e:S}
S_{t}= S_0+\int_0^t  r S_{u} \d u + \int_0^t \sigma(V_u) S_{u} \d W_u,
\end{eqnarray}
which is the integral form of \eqref{e:dS_t}.

Let $Y=\{Y_t, t\geq 0\}$ be the Gaussian Volterra process correlated with the price process $S$ in the following sense. 
\begin{align}
Y_t &= \int_0^t K_H (t,s)\,\d Z_s, \\
Z_s &= \rho\, W_s + \sqrt{1-\rho^2}\, \widetilde{W}_s, \qquad \rho\in [-1,1],
\end{align}
where $\widetilde{W}$ is also a Wiener process (and hence is the process $Z$). 

In the following, we will distinguish two qualitatively different cases. Either the process $V$ is given explicitly as some known functional of the Gaussian process $Y$, i.e., 
\begin{equation}\label{e:V_t:func}
V_t = f(t,Y_t), 
\end{equation}
where $f$ is a function differentiable in the second variable, or $V$ can be given as a solution to the stochastic differential equation
\begin{align}
V_t &= V_0 + \int_0^t u(V_s) \d s + \int_0^t v(V_s) K_H(t,s) \d Z_s, \label{e:V_t}
\end{align}
where functions $u,v$ are assumed to be in $C^2(\mathbb R)$. Moreover we will assume that $u(V_t)$ and $v(V_t)$ are square-integrable processes. Note that 
\begin{enumerate}
\item the case $K_H (t,s)=\1_{[0,t]}(s)$ reduces this model in both cases to the classical non-fractional SV model, see also Appendix \ref{sec:appendix1},

\item $\rho$ is the correlation between processes $V$ and $S$.
\end{enumerate}

It is worth to mention that some of the Volterra processes have rather complicated formulas for functions $u$ and $v$ in the representation \eqref{e:V_t}, but have on the other hand some nice closed forms \eqref{e:V_t:func}. Calculation of Greeks in these models can be simplified and this is the major reason to treat these two cases separately. In particular, we will consider two classes of rough Volterra SV models:

\subsubsection*{Rough Volterra SV models - Examples}

\begin{enumerate}[label={(\roman*)}]
\item In the $\alpha$RFSV model, first introduced by \cite{MerinoPospisilSobotkaSottinenVives21ijtaf}, $\sigma(x)={x}$ and the variance process $V_t$ is given explicitly by \eqref{e:V_t:func} with $$f(t,x)=V_0\cdot\exp\left\{ \xi\cdot x - \frac12\alpha\xi^2 r(t)\right\}.$$ In particular
\begin{itemize}
\item for $\alpha=1$ we get the rough Bergomi model, which in the $r=0$ case coincide with the rough SABR($\beta=1$) model, and 
\item for $\alpha=0$ we get the non-stationary RFSV model.
\end{itemize}

\item In a new mixed $\alpha$RFSV model, a two factor rough model, two fractional volatility factors are considered, one with $H<\frac{1}{2}$ and another with $H'>\frac{1}{2}$. This model is a generalisation of the $\alpha$RFSV model, see \cite{MerinoPospisilSobotkaSottinenVives21ijtaf} and the mixed fractional Bergomi model, see \cite{AlosLeon2021}.
\item In the rough Stein and Stein model, $\sigma(x)=x$, $u(x)=\kappa (\theta-x)$ and $v(x)=\nu$. Here $V$ is the volatility process.
\item Note that the case $u\equiv v\equiv 0$ is the classical Black-Scholes model.
\end{enumerate}

\section{Greeks formulas in rough Volterra SV models}\label{sec:results1}

In this section we calculate Greeks formulas for the $\alpha$RFSV model (Section \ref{ssec:aRFSV}), for the mixed $\alpha$RFSV model (Section \ref{ssec:maRFSV}) and for the Stein and Stein model (Section \ref{ssec:rSS}). 

Notice that the formulas given in Theorem \ref{t:greeks_general} depend on the process $h$ and the Skorohod integral $\delta$. In the proof of Theorem \ref{t:greeks_general}, the parameter $\varepsilon > 0$ did not play an important role for the proof, but will play instead an important role in the numerics. Indeed, since we can choose $h$ such that $||\1_{[0,T]}(t) - h_t||_{L^2(\Omega\times [0,T])} \leq \varepsilon$ for $\varepsilon > 0$ arbitrarily small, we will simulate the formulas of Theorem \ref{t:greeks_general} for $h_t = \1_{[0,T]}(t)$. Using the Theorem \ref{t:integral_of_product}, we can derive approximate formulas for the Greeks.

\begin{proposition} \label{p: approximate Greek formulas}
    Let $\mathcal{G}_T := \int_0^T G(u,T) \d u$. The approximate formulas for the Greeks are:
    \begin{align*}
    \Delta = & \frac{\e^{-rT}}{S_0} \E_{\Q}\left[ f(S_T) \left( \frac{W_T}{\mathcal{G}_T} + \frac{\int_0^T D_s\mathcal{G}_T \d s}{\mathcal{G}_T^2} \right)\right], \\
    \Gamma = & -\frac{\e^{-rT}}{S_0^2}\E_{\Q}\left[ f(S_T) \left( \frac{W_T}{\mathcal{G}_T} + \frac{\int_0^T D_s\mathcal{G}_T \d s}{ \mathcal{G}_T ^2}\right)\right] \\
    &+\frac{\e^{-rT}}{S_0^2}\E_{\Q} \left[ f(S_T) \left(\frac{W_T}{\mathcal{G}_T} + \frac{\int_0^T D_s\mathcal{G}_T \d s}{\mathcal{G}_T^2} \right)^2\right] \\
    &-\frac{\e^{-rT}}{S_0^2}\E_{\Q} \left[ f(S_T) \left( \frac{T}{\mathcal{G}_T} - \frac{W_T \int_0^T D_t \mathcal{G}_T \d t}{\mathcal{G}_T^2} + \frac{\int_0^T \int_0^T D_t D_s \mathcal{G}_T \d s \d t}{\mathcal{G}_T} - \frac{2\left(\int_0^T D_s\mathcal{G}_T \d s \right)^2}{\mathcal{G}_T^3} \right) \right], \\
    \varrho = &rT\e^{-rT}\E_{\Q}\left[ f(S_T) \left(\frac{W_t}{\mathcal{G}_T} + \frac{\int_0^T D_s \mathcal{G}_T \d s}{\mathcal{G}_T^2} \right)\right] - \e^{-rT}\E_{\Q}[f(S_T)], \\
    \mathcal{V} = & \e^{-rT} \E_{\Q}\left[ f(S_t) \left( \frac{W_T \int_0^T b_{V_0}(V_s) \d s}{\mathcal{G}_T} - \frac{\int_0^T \int_0^T D_sb_{V_0}(V_u) \d u \d s}{\mathcal{G}_T} + \frac{\int_0^T D_s \mathcal{G}_T \left(\int_0^T b_{V_0}(V_u) \d u \right) \d s}{\mathcal{G}_T^2}\right)\right] \\
    & + \e^{-rT}\E_{\Q}\left[ f(S_T)\left( \int_0^T a_{V_0}(V_s) \d W_s \left(\frac{W_T}{\mathcal{G}_T}-\frac{\int_0^T D_s \mathcal{G}_T}{\mathcal{G}_T^2} \right) - \frac{\int_0^T a_{V_0}(V_t) + \int_0^T D_t a_{V_0}(V_s) \d W_s \d t}{\mathcal{G}_T}  \right) \right], \\
    \mathcal{H} = & \e^{-rT} \E_{\Q}\left[ f(S_t) \left( \frac{W_T \int_0^T b_{H}(V_s) \d s}{\mathcal{G}_T} - \frac{\int_0^T \int_0^T D_sb_{H}(V_u) \d u \d s}{\mathcal{G}_T} + \frac{\int_0^T D_s \mathcal{G}_T \left(\int_0^T b_{H}(V_u) \d u \right) \d s}{\mathcal{G}_T^2}\right)\right] \\
    & + \e^{-rT}\E_{\Q}\left[ f(S_T)\left( \int_0^T a_{H}(V_s) \d W_s \left(\frac{W_T}{\mathcal{G}_T}-\frac{\int_0^T D_s \mathcal{G}_T}{\mathcal{G}_T^2} \right) - \frac{\int_0^T a_{H}(V_t) + \int_0^T D_t a_{H}(V_s) \d W_s \d t}{\mathcal{G}_T}  \right) \right],
\end{align*}
where
\begin{align*}
    D_t a_{\theta}(V_s) &= D_t \partial_{\theta} (V_s)\\
    &= \partial_{\theta} \sigma'(V_s) D_t V_s\\
    &= \sigma''(V_s) D_t V_s \partial_{\theta} V_s + \sigma'(V_s) \partial_{\theta} D_tV_s
\intertext{and}
    D_t b_{\theta}(V_s) &= D_t \partial_{\theta} \left(r - \frac{1}{2}\sigma^2(V_s)\right)\\
    &= \partial_{\theta} D_t \frac{-\sigma^2(V_s)}{2} \\
    &= - \partial_{\theta}\sigma(V_s) \sigma'(V_s) D_tV_s \\
    &= -\left( \sigma'(V_s)^2  + \sigma(V_s) \sigma''(V_s) \right)D_tV_s \partial_{\theta} V_s - \sigma(V_s)\sigma'(V_s) \partial_{\theta}(D_tV_s).
\end{align*}
\end{proposition}

\subsection{\texorpdfstring{$\alpha$}{a}RFSV model}\label{ssec:aRFSV}

In the $\alpha$RFSV model firstly introduced by \cite{MerinoPospisilSobotkaSottinenVives21ijtaf} we assume that $\sigma(x) = x$ and that the volatility process is
\begin{equation}\label{e:aRFSV}
V_t = V_0\exp\left\{\xi Y_t - \frac12\alpha\xi^2 r(t) \right\}, \quad t\geq 0,
\end{equation}
where $V_0>0$, $\xi>0$ and $\alpha\in[0,1]$ are model parameters together with $H<1/2$. Recall that $r(t)=t^{2H}$ for the case $$K_H(t,s)=\sqrt{2H}(t-s)^{H-\frac{1}{2}}.$$

For $\alpha = 0$ this model becomes the non-stationary {RFSV model} \citep{Gatheral18} and for $\alpha = 1$ we get the rough Bergomi model \citep{Bayer16}, which is also, in the null interest rate case, the special case of the SABR($\beta=1)$ model. Values of $\alpha$ between zero and one gives the model one more degree of freedom in the sense of stationarity and it is not rare that calibrations to real market data give us these values \citep{MatasPospisil23aofi}.

To calculate the considered Greeks, we need to calculate $D_s V_r$,  $D_t D_s V_r$, $G(t,T)$, $D_s G(t,T)$, $D_tD_sG(t,T)$ and the derivatives of $V_t$ and $DV_t$ with respect to $V_0$ and $H$.

Using the fact that $D_sY_t=\rho K_H(t,s)$ as a consequence of Proposition~\ref{deriv-int}, we have 

$$D_s V_r=\rho \xi V_r K_H(r,s)\1_{[0,r]}(s)$$
and 
$$D_t D_s V_r=\rho^2 \xi^2 K_H(r,s) K_H(r,t) V_r \1_{[0,r]}(s\vee t). $$

Moreover, from \eqref{e:aRFSV} we have that
\[
a_{V_0}(V_t) = \partial_{V_0} V_t = \frac{V_t}{V_0}, \quad b_{V_0}(V_s) =\partial_{V_0}\left( r - \frac{V_s^2}{2}\right) =  \frac{-V_s^2}{V_0}.
\]
Since $\sigma(x) = x$, the formulas needed to compute $\mathcal{V}$ are quite simple. Indeed,
\[
D_t a_{V_0}(V_s) = \partial_{V_0} D_t V_s = \frac{1}{V_0} \rho \xi V_s K_H(s,t) \1_{[0,s]}(t).
\]
Hence,
\[
\int_0^T a_{V_0}(V_t) + \int_0^T D_t a_{V_0}(V_s) \d W_s \d t = \frac{1}{V_0}\left(\int_0^T V_t \d t + \int_0^T V_s \kappa_s \d W_s \right).
\]
In the same way,

\begin{eqnarray*}
D_tb_{V_0}(V_s) &=&-D_t V_s \cdot \frac{V_s}{V_0} - V_s \partial_{V_0} D_t V_s\\ 
&=&-\frac{2}{V_0}D_tV_s\\
&=&\frac{2}{V_0}\rho \xi V_s K_H(s,t) \1_{[0,s]}(t).
\end{eqnarray*}

Hence,
\[
\int_0^T b_{V_0}(V_u) \d u = -\frac{1}{V_0}\int_0^T V_u^2 \d u, \quad \int_0^T \int_0^T D_s b_{V_0}(V_u) \d u \d s = -\frac{2}{V_0}\rho \xi \int_0^T V_u \kappa_u \d u.
\]
Regarding the derivatives with respect to $H$ we have that
\[
a_H(V_t) = \partial_{H} V_t = V_t \cdot \left( \xi Y'_t - \alpha \xi^2 t^{2H} \ln(t) \right)
\]
where
\[
Y'_t = \int_0^t \partial_H K_H(t,s) \d W_s.
\]
Hence,
\begin{align*}
    D_s a_H(V_u) = \partial_{H} D_sV_u = &\rho \xi\partial_H V_r K_H(u,s) \1_{[0,u]}(s) + \rho \xi V_u \partial_H K_H(u,s) \1_{[0,u]}(s) \\
    = &\left[ \rho \xi V_u \left(\xi Y_u' - \alpha \xi^2 u^{2H} \ln(u)\right)K_H(u,s) + \rho \xi V_u \partial_H K_H(u,s)\right]\1_{[0,u]}(s).
\end{align*}
This implies that
\begin{align*}
    &\int_0^T a_{H}(V_t) + \int_0^T D_ta_H(V_s) \d W_s \d t \\
    = &\int_0^T V_t \cdot \left( \xi Y'_t - \alpha \xi^2  t^{2H}\ln(t)\right) \d t + \rho \xi \int_0^T \left[ V_s(\xi Y_s' - \alpha \xi^2 s^{2H}\ln(s))\kappa_s + V_s \kappa'_s\right] \d W_s.
\end{align*}
Similarly, for $b_H$ we have
\[
b_H(V_t) = -\partial_H \frac{V_t^2}{2} = -V_t \partial_H V_t = - V_t^2(\xi Y_t' - \alpha \xi^2 t^{2H} \ln(t))
\]
and
\begin{align*}
\int_0^T \int_0^T D_sb_H(V_u) \d u \d s= -&\int_0^T \int_0^T \partial_H V_u \cdot D_sV_u - V_u \cdot \partial_H(D_sV_u) \d u \d s \\
= -  &\rho \xi\int_0^T  V_u^2 \left[2(\xi Y_u' - \alpha \xi^2 u^{2H} \ln(u)) \kappa_u + \kappa'_u \right] \d u
\end{align*}
From the definition of $G$ \eqref{gformula} and taking $\sigma(x)=x$ we have 

\begin{align*}
G(t,T)
&= V_t +\int_t^T D_t V_r \d W_r - \int_t^T V_r D_t V_r \d r\\
&= V_t+\rho\xi\int_0^t V_r K_H(r,t) \d W_r - \rho \xi \int_t^T V_r^2  K_H(r,t)\d r.
\end{align*}

Finally, using Lemma~\ref{l:DG},

\begin{align*}
D_s G(t,T)
&= D_s V_t + \rho\xi V_s K_H(s,t)\1_{[t,T]}(s)+\int_{t\vee s}^T  \rho\xi K_H(r,t)D_sV_r \d W_r-2\int_{t\vee s}^T \rho \xi K_H(r,t) V_rD_sV_r \d r\\
&= \rho \xi V_{t\vee s} K(t\vee s,s\wedge t)+\rho^2 \xi^2\int_{t \vee s }^T  K(r,s) K(r,t) V_r \d W_r- 2\rho^2 \xi^2\int_{t\vee s}^T  K(r,s) K(r,t) V^2_r \d r.
\end{align*}

Using these formulas we have 

\begin{align}
\int_0^T G(t,T) \d t
&= \int_0^T V_t \d t + \rho\xi \int_0^T \int_t^T V_r K_H(r,t) \d W_r \d t - \rho\xi \int_t^T V_r^2 K_H(r,t) \d r \d t \notag\\
&= \int_0^T V_t \d t + \rho\xi \int_0^T V_r \int_0^r K_H(r,t) \d t \d W_r - \rho\xi \int_0^T V_r^2 \int_0^r K_H(r,t) \d t \d r \notag\\
&= \int_0^T V_t \d t + \rho\xi \left( \int_0^T V_r \kappa_r \d W_r - \int_0^T V_r^2 \kappa_r \d r\right). \label{e:aRFSF_G}
\end{align}
Differentiating this last expression, we obtain
\begin{align}
\int_0^T \int_0^T D_s G(t,T) \d t \d s
&= \int_0^T \int_0^T\rho \xi V_{t\vee s} K(t\vee s,s\wedge t)\d s \d t \notag\\
&\quad + \int_0^T \int_0^T \rho^2 \xi^2\int_{t \vee s }^T  K(r,s) K(r,t) V_r \d W_r \d s \d t \notag\\
&\quad - \int_0^T \int_0^T 2\rho^2 \xi^2\int_{t\vee s}^T  K(r,s) K(r,t) V^2_r \d r\d s \d t \notag\\
&= 2\rho\xi \int_0^T \int_0^t V_t K_H(t,s) \d s \d t \notag\\
&\quad+ \rho^2\xi^2 \int_0^T V_r \left( \int_0^r \int_0^r K_H(r,t)K_H(r,s) \d s \d t\right) \d W_r \notag\\
&\quad- 2\rho^2\xi^2 \int_0^T V_r^2 \left(\int_0^r \int_0^r K_H(r,t) K_H(r,s) \d s \d t\right) \d r \notag\\
&= 2\rho\xi \int_0^T V_t \kappa_t \d t + \rho^2\xi^2 \left(\int_0^T V_r \kappa_r^2 \d W_r 
- 2\int_0^T V_r^2 \kappa_r^2 \d r\right) \label{e:aRFSV_DG}
\end{align}
and finally,
\begin{align*}
    \int_0^T \int_0^T \int_0^T D_tD_sG(u,t) \d u \d s \d t = & 2\rho^2 \xi^2 \int_0^T \kappa_u^2 V_u \d u \\
    + &\rho^2 \xi^2 \int_0^T \kappa_u^2 V_u \d u + \rho^3 \xi^3 \int_0^T \kappa_u^3 V_u \d W_u \\
    - & 4 \rho^3 \xi^3 \int_0^T \kappa_u^3 V_u^2  \d u \\
    = & 3\rho^2 \xi^2 \int_0^T \kappa_u^2 V_u \d u - \rho^3 \xi^3 \left( \int_0^T \kappa_u^3 V_u \d W_u - 4 \int_0^T \kappa_u^3 V_u^2\right).
\end{align*}
In conclusion, the tools needed to compute the Greeks for the $\alpha$RFSV model are the following:
\begin{itemize}
    \item $\mathcal{G}_T = \int_0^T V_t \d t + \rho\xi \left( \int_0^T V_r \kappa_r \d W_r - \int_0^T V_r^2 \kappa_r \d r\right)$,
    \item $\int_0^T D_s \mathcal{G}_T \d s = 2\rho\xi \int_0^T V_t \kappa_t \d t + \rho^2\xi^2 \left(\int_0^T V_r \kappa_r^2 \d W_r 
- 2\int_0^T V_r^2 \kappa_r^2 \d r\right)$,
    \item $ \int_0^T \int_0^T D_t D_s \mathcal{G}_T \d s \d t = 3\rho^2 \xi^2 \int_0^T \kappa_u^2 V_u \d u - \rho^3 \xi^3 \left( \int_0^T \kappa_u^3 V_u \d W_u - 4 \int_0^T \kappa_u^3 V_u^2\right)$,
    \item $a_{V_0}(V_t) =  \frac{V_t}{V_0}$,
    \item $b_{V_0}(V_t) = - \frac{V_t^2}{2}$,
    \item $\int_0^T a_{V_0}(V_t) + \int_0^T D_t a_{V_0}(V_s) \d W_s \d t = \frac{1}{V_0}\left(\int_0^T V_t \d t + \int_0^T V_s \kappa_s \d W_s \right)$,
    \item $\int_0^T \int_0^T D_sb_{V_0}(V_u) \d u \d s = \frac{-2}{V_0}\rho \xi \int_0^T V_u \kappa_u \d u$,
    \item $a_H(V_t) = V_t \cdot \left( \xi Y'_t - \alpha \xi^2  t^{2H}\ln(t)\right)$, where $Y'_t = \int_0^t \partial_H K_H(t,s) \d W_s$, 
    \item $b_H(V_t) = - V_t^2\cdot \left( \xi Y'_t - \alpha \xi^2  t^{2H}\ln(t)\right)$,
    \item %
    \begin{align*}
        &\int_0^T a_{H}(V_t) + \int_0^T D_ta_H(V_s) \d W_s \d t \\
    &= \int_0^T V_t \cdot \left( \xi Y'_t - \alpha \xi^2  t^{2H}\ln(t)\right) \d t + \rho \xi \int_0^T \left[ V_s(\xi Y_s' - \alpha \xi^2 s^{2H}\ln(s))\kappa_s + V_s \kappa'_s\right] \d W_s,
    \end{align*}
    \item $\int_0^T \int_0^T D_sb_H(V_u) \d u \d s = -  \rho \xi\int_0^T  V_u^2 \left[2(\xi Y_u' - \alpha \xi^2 u^{2H} \ln(u)) \kappa_u + \kappa'_u \right] \d u$.
\end{itemize}
These are all ingredients necessary to plug into the Greeks formulas in Proposition \ref{p: approximate Greek formulas}.

\begin{remark}
Greeks formulas for the rough Bergomi model can be easily obtained from the above formulas by taking $\alpha=1$, and similarly formulas for the non-stationary RFSV model or rough SABR($\beta=1$) model by taking $\alpha=0$.
\end{remark}

In Figure \ref{f:aRFSV_Delta} we can see a convergence of the Delta for the $\alpha$RFSV model with $H=0.15$, $\alpha=1$ (rBergomi), $V_0=0.354$, $\xi=0.212$ and $\rho=-0.756$. Numerically, the kernel \eqref{e:K_H_eps} is considered with $\varepsilon=10^{-6}$. Market values are $S_0=100$ (dollars) and $r=0.05$, option parameters are strike price $K=100$ (dollars) and maturity $T=1$ (year). Monte-Carlo (MC) simulation point estimate together with the 99\% confidence interval (CI) is plotted as a dependence on the number of simulations (NS). In particular $\Delta_{\text{call}} = 0.3399 \pm 0.0051$ and $\Delta_{\text{put}} = -0.9220 \pm 0.0156$. Although we have no analytical $\alpha$RFSV formula for Delta, we plot the horizontal dashed line which is the numerical mean at the end (for the largest number of simulations).

\begin{figure}[ht]
\includegraphics[width=0.49\textwidth,trim=0mm 0mm 20mm 0mm,clip]{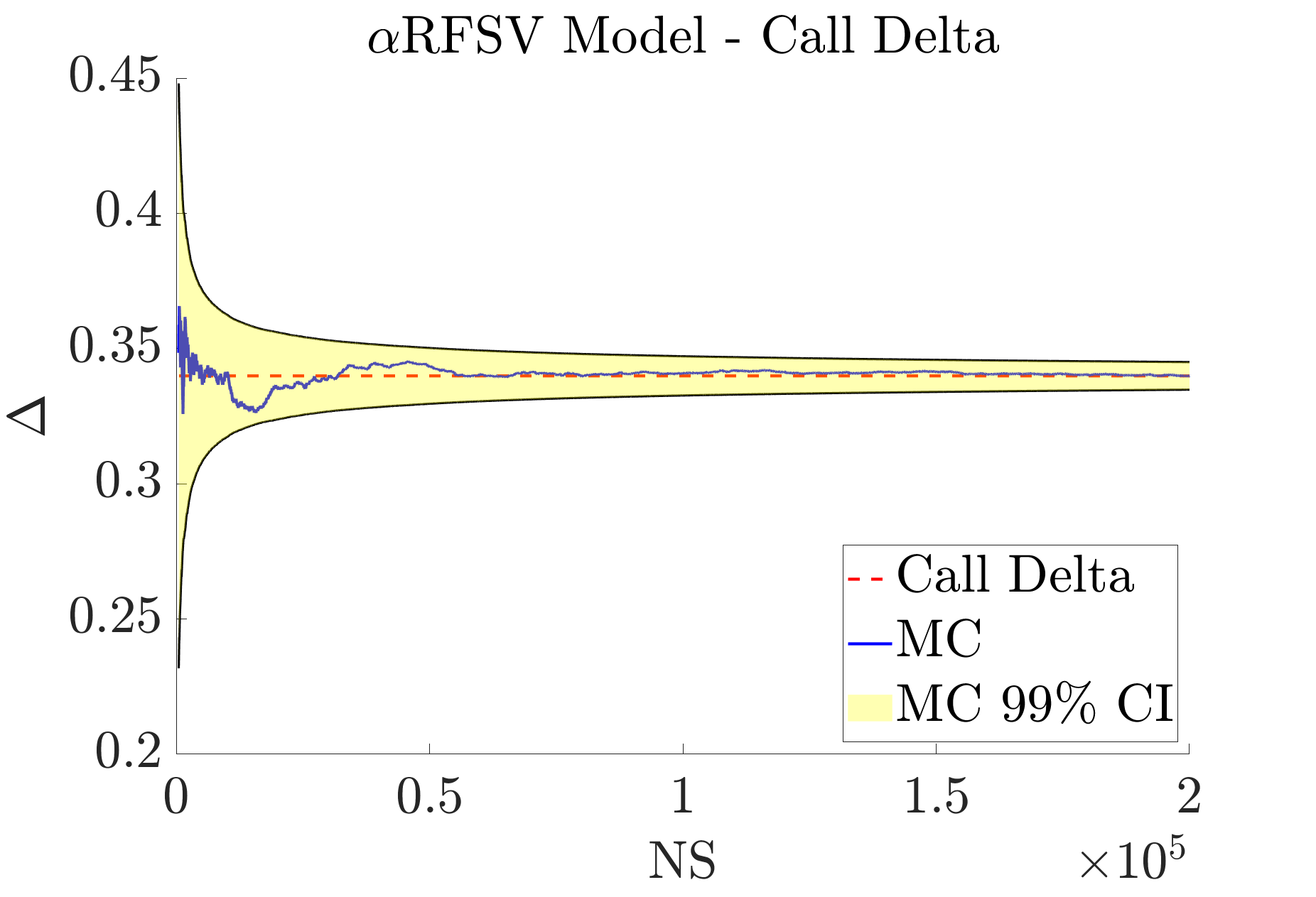}
\includegraphics[width=0.49\textwidth,trim=0mm 0mm 20mm 0mm,clip]{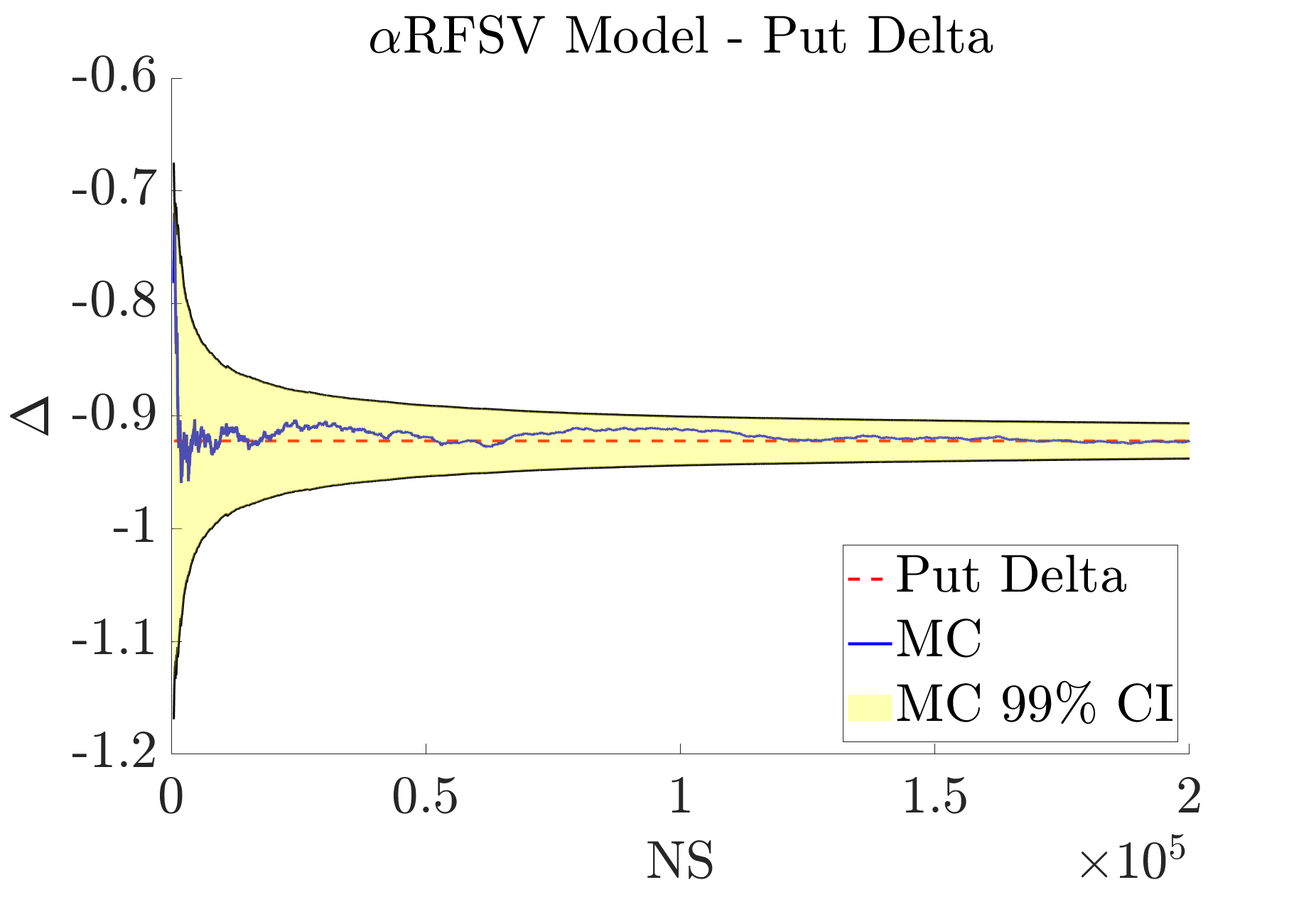}
\caption{Convergence of the $\alpha$RFSV Delta in the case $H=0.15$.}\label{f:aRFSV_Delta}
\end{figure}

In Figure \ref{f:aRFSV_Delta_on_H} we depict the dependence of the $\alpha$RFSV Delta on $H$, both for the call and put option case, with remaining parameters left unchanged. 

\begin{figure}[ht]
\includegraphics[width=0.49\textwidth,trim=0mm 0mm 20mm 0mm,clip]{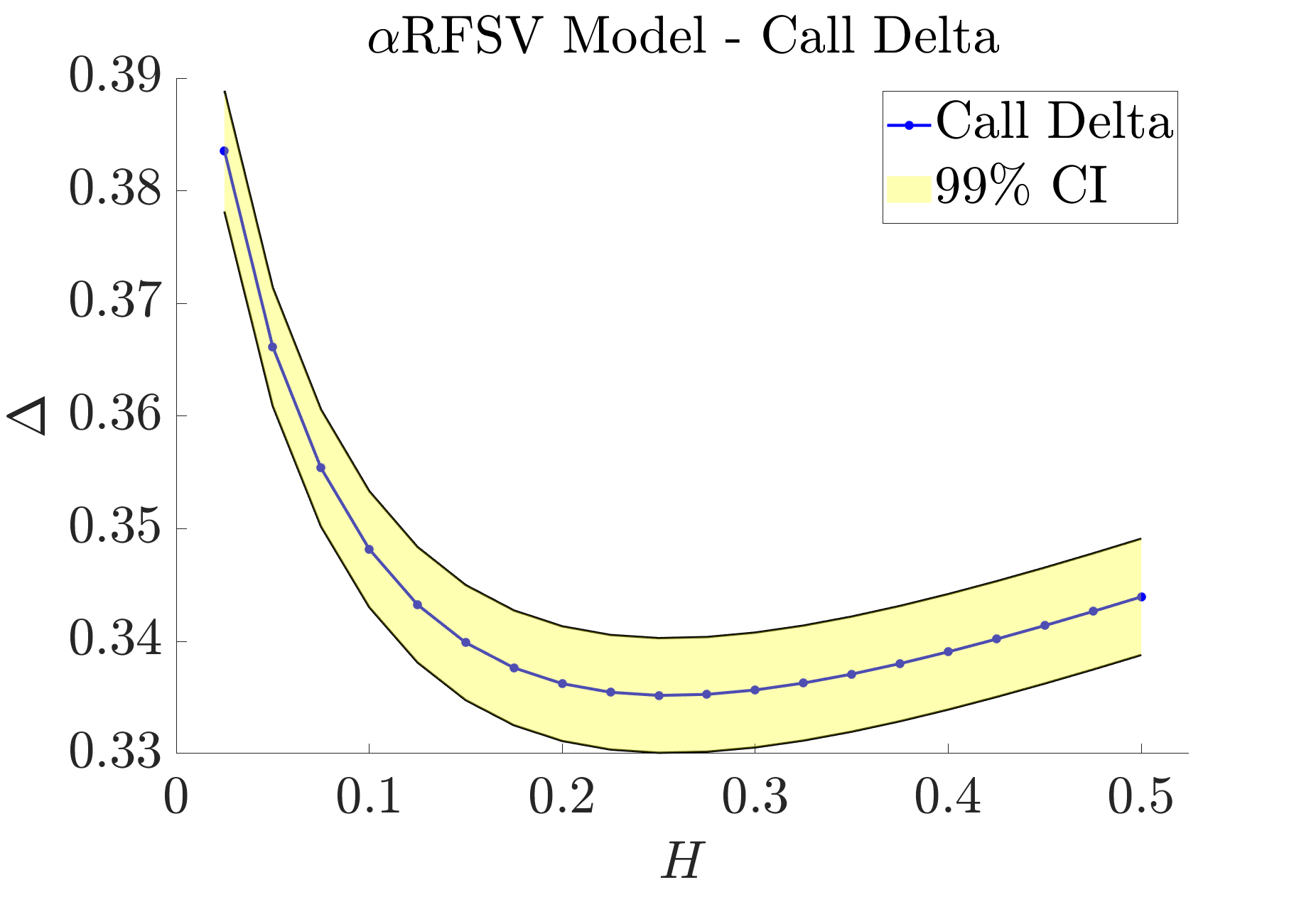}
\includegraphics[width=0.49\textwidth,trim=0mm 0mm 20mm 0mm,clip]{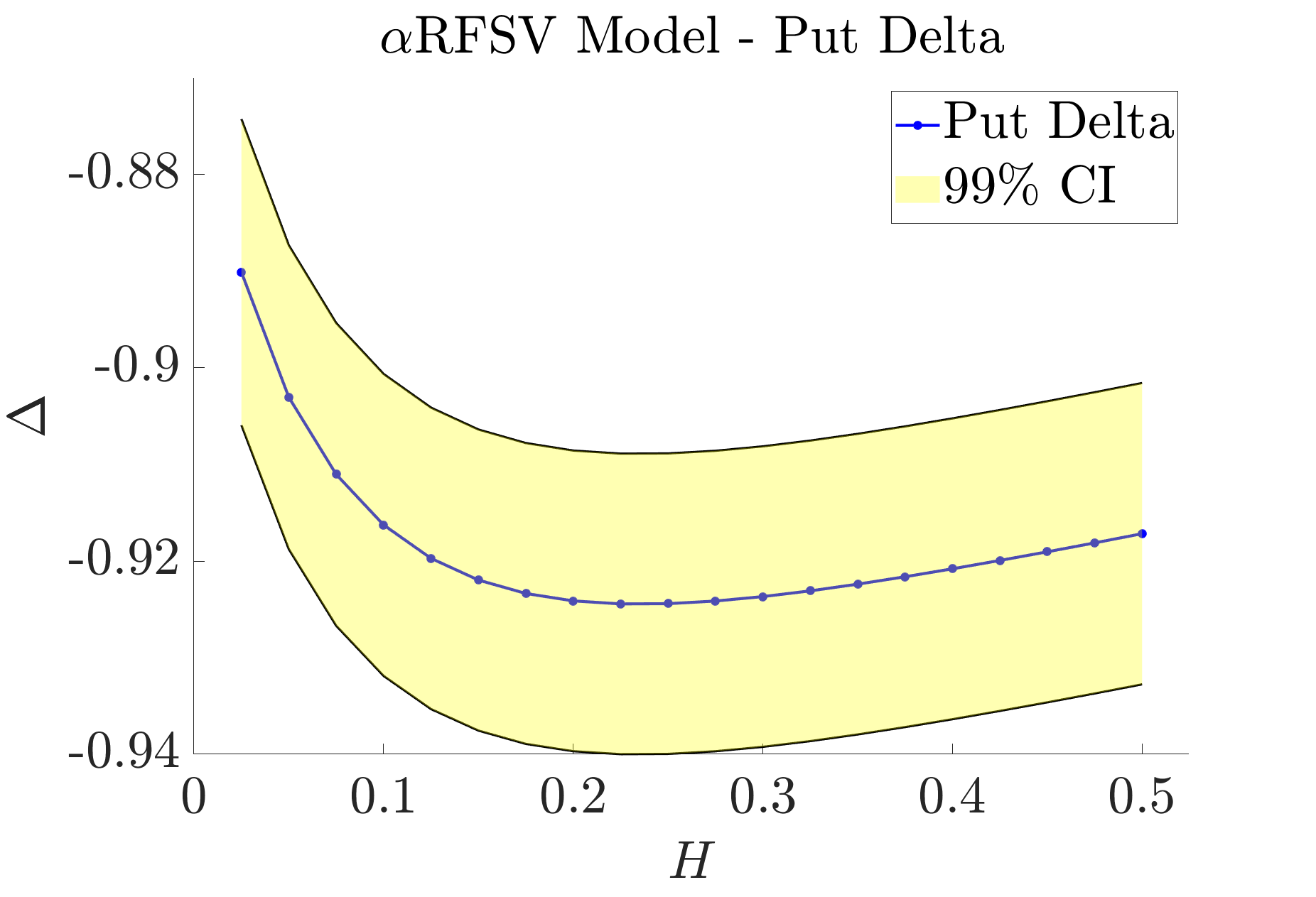}
\caption{Dependence of the $\alpha$RFSV Delta on the Hurst parameter $H$.}\label{f:aRFSV_Delta_on_H}
\end{figure}

\subsection{Mixed \texorpdfstring{$\alpha$}{a}RFSV model}\label{ssec:maRFSV}

In this section we follow \cite{AlosLeon2021}, who considered a mixed fractional Bergomi model in which the volatility process is an arithmetic average of two fractional processes, each with a different Hurst parameter. In particular we consider 

\begin{equation}\label{mixedV_t}
V_t=\frac{1}{2}[V_t^{H}+V_t^{H^{'}}],
\end{equation}
where $V_t^H$ and $V_t^{H^{'}}$ are $\alpha$RFSV processes defined by \eqref{e:aRFSV} with the Hurst parameters $H<\frac{1}{2}$ and $H^{'}>\frac{1}{2}$. In this setting, volatility processes $V_t^{H}$ and $V_t^{H^{'}}$ represents the short-memory and long-memory factors respectively.

Calculations that we performed in Section \ref{ssec:aRFSV} have to be changed in the following way:

\begin{equation}
D_s V_r = \frac{\rho}{2}[\xi^H V_r^H K_{H}(r,s)+\xi^{H^{'}}V_r^{H{'}}K_{H^{'}}(r,s)]{1\!\!1}_{[0,r]}(s)
\end{equation}
and

\begin{equation}
D_t D_s V_r= \frac{\rho}{2}
[(\xi^{H})^{2} K_H(r,s) K_H(r,t) V^H_r+(\xi^{H^{'}})^{2} K_{H^{'}}(r,s) K_{H^{'}}(r,t) V^{H^{'}}_r]{1\!\!1}_{[0,r]}(s\vee t).
\end{equation}
In order to get all the ingredients needed to compute the Greeks for this model, one should repeat the computations for the $\alpha$RFSV model, add them and divide them by 2. Examples of how these computations work are, for example 
\begin{align*}
G(t,T) &= V_t +\int_t^T D_t V_r dW_r-\int_t^T V_r D_t V_r\d r\\
&= \frac{1}{2}[V_t^{H}+V_t^{H^{'}}] \\
&\quad+\frac{\rho}{2}\left(\xi^H \int_t^T  V_r^H K_{H}(r,t) \d W_r +\xi_{H^{'}} \int_t^T V_r^{H^{'}} K_{H^{'}}(r,t) \d W_r\right)\\
&\quad -\frac{\rho}{2}\left(\xi^{H}\int_t^T (V_r^H)^2 K_{H}(r,t) \d r+\xi^{{H}^{'}}\int_t^T (V_r^{{H}^{'}})^2 K_{H^{'}}(r,t) \d r\right)
\end{align*}
and

\begin{align*}
\int_0^T G(t,T)\d t &= \frac{1}{2}\int_0^T[V_t^{H}+V_t^{H^{'}}]\d t\\
&\quad+\frac{\rho}{2}\left(\xi^H\int_0^T  \int_t^T  V_r^H K_{H}(r,t) \d W_r \d t+\xi_{H^{'}} \int_0^T \int_t^T V_r^{H^{'}} K_{H^{'}}(r,t) \d W_r \d t\right)\\
&\quad-\frac{\rho}{2}\left(\xi^{H}\int_0^T\int_t^T (V_r^H)^2 K_{H}(r,t) \d rdt+\xi^{{H}^{'}}\int_0^T \int_t^T  (V_r^{{H}^{'}})^2 K_{H^{'}}(r,s) \d r dt \right).\\
\end{align*}

Finally
\begin{align*}
D_s G(t,T) &= D_{s\wedge t} V_{s\vee t} + \int_{s\vee t}^T D_s D_t V_r \d W_r- \int_{s\vee t}^T D_sV_r \cdot D_t V_r\d r-\int_{s\vee t}^T V_r\cdot D_sD_t V_r\d r \\
&=\frac{\rho}{2}\xi^H V_{s\vee t}^H K_{H}(s\vee t,s\wedge t)+\frac{1}{2}\xi^{H^{'}}V_{s\vee t}^{H{'}}K_{H^{'}}(s\vee t, s\wedge t)\\
&\quad+ \frac{\rho}{2}\int_{s\vee t}^T\left[(\xi^{H})^{2} K_H(r,s) K_H(r,t) V^H_r +(\xi^{H^{'}})^{2} K_{H^{'}}(r,s) K_{H^{'}}(r,t) V^{H^{'}}_r\right]\d W_r\\
&\quad-\frac{\rho}{4}\int_{s\vee t}^T\left[\xi^H V_t^H K_{H}(r,s)+\xi^{H^{'}}V_t^{H{'}}K_{H^{'}}(r,s)\right]\cdot
\left[\xi^H V_t^H K_{H}(r,t) +\xi^{H^{'}}V_r^{H{'}}K_{H^{'}}(r,t)\right]\d r\\
&\quad-\frac{\rho}{4}\int_{s\vee t}^T\left[V_r^{H}+V_r^{H^{'}}\right]\cdot
\left[(\xi^{H})^{2} K_H(r,s) K_H(r,t) V^H_r+(\xi^{H^{'}})^{2} K_{H^{'}}(r,s) K_{H^{'}}(r,t) V^{H^{'}}_r\right]\d r\\
\end{align*}
and 

\begin{align*}
&\int_0^T \int_0^T D_s G(t,T)\d s \d t \\
&=\frac{\rho}{2}\int_0^T \int_0^T \left[\xi^H V_{s\vee t}^H K_{H}(s\vee t, s\wedge t)+\xi^{H^{'}}V_t^{H{'}}K_{H^{'}}(s\vee t, s\wedge t)\right]\d s\d t\\
&\quad+\frac{\rho}{2}\int_0^T \int_0^T\int_{s\vee t}^T \left[(\xi^{H})^{2} K_H(r,s) K_H(r,t) V^H_r +(\xi^{H^{'}})^{2} K_{H^{'}}(r,s) K_{H^{'}}(r,t) V^{H^{'}}_r\right]\d W_r \d s \d t\\
&\quad-\frac{\rho}{4}\int_0^T \int_0^T\int_{s\vee t}^T\left[\xi^H V_t^H K_{H}(r,s)+\xi^{H^{'}}V_t^{H{'}}K_{H^{'}}(r,s)\right]\cdot
\left[\xi^H V_t^H K_{H}(r,t)+\xi^{H^{'}}V_r^{H{'}}K_{H^{'}}(r,t)\right]\d r \d s \d t\\
&\quad-\frac{\rho}{4} \int_0^T \int_0^T\int_{s\vee t}^T\left[V_r^{H}+V_r^{H^{'}}\right]\cdot
\left[(\xi^{H})^{2} K_H(r,s) K_H(r,t) V^H_r +(\xi^{H^{'}})^{2} K_{H^{'}}(r,s) K_{H^{'}}(r,t) V^{H^{'}}_r\right]\d r\d s\d t.
\end{align*}

\subsection{Rough Stein-Stein model}\label{ssec:rSS}
Now we assume the volatility $V$ satisfies the equation

$$V_s = V_0 + \int_0^s \kappa (\theta-V_u)\d u + \nu \int_0^s K_H(s,u)\d Z_u,$$
where $\kappa$, $\theta$ and $\nu$ are positive constants. 

In relation with the existence and uniqueness of solution of this equation we refer the reader to \cite{NualartOuknine2002}. Being the drift term derivable, the Malliavin differentiability of the solution is straightforward. 

Then, for $t\leq s$ we have
$$D_tV_s = \rho\nu K_H(s,t)-\kappa \int_t^s D_tV_u \d u.$$
This is a linear integral equation. Its solution is given by 
$$D_tV_s = \rho\nu K_H(s,t)-\kappa\rho\nu\int_t^s  K_H(u,t) \e^{-\kappa (s-u)}\d u.$$
Being $D_tV_u$ deterministic, $D_sD_tV_u$ is null. Then, applying Lemma \ref{l:DS} we have 
$$G(t,T)=V_t+\int_t^T D_tV_u (\d W_u-V_u \d u)$$
and using  Lemma \ref{l:DG} we have
$$D_sG(t,T)=D_{s\wedge t}V_{s\vee t}-\int_{s\vee t}^T D_tV_u D_sV_u \d u.$$
Then, 
$$\int_0^T G(u,T) \d u = \int_0^T V_u \d u + \rho\nu\int_0^T \left(\int_0^s D_tV_u \d u\right)(\d W_s-V_s \d s),$$
$$\int_0^T\int_0^T D_sG(u,T)\d u\d s = \int_0^T\int_0^T D_{s\wedge u}V_{s\vee u}\d s\d u - \int_0^T \left(\int_0^u D_sV_u \d s\right)^2 \d u.$$
and
$$ \int_0^T \int_0^T \int_0^T D_tD_sG(u,t) \d u \d s \d t = 0.$$
In order to compute the Greeks $\mathcal{V}$ and $\mathcal{H}$, we need to compute the derivatives with respect to $V_0$ and $H$ of the quantities $V_t$, $\frac{-V_t^2}{2}$, $D_tV_s$ and $D_t\left(\frac{-V_t^2}{2} \right)$. On the one hand, notice that
\[
\partial_{V_0}V_t = 1 - \kappa \int_0^t \partial_{V_0}V_u \d u.
\]
Hence,
\[
a_{V_0}(V_t) = \partial_{V_0} V_t = \e^{-\kappa t}, \quad D_sa_{V_0}(V_u) = 0.
\]
In a similar fashion, we have that
\[
b_{V_0}(V_t) = -\e^{-\kappa t}V_t , \quad D_t b_{V_0}(V_s) = -\e^{-\kappa s}\left(\rho\nu K_H(s,t)-\kappa\rho\nu\int_t^s  K_H(u,t) \e^{-\kappa (s-u)}\d u\right).
\]
Concerning the derivatives with respect to $H$ we have
\begin{equation}\label{e: derivative wrt H stein-stein}
\partial_H V_t = -\kappa \int_0^t \partial_H V_u \d u + \int_0^t \partial_H K_H(t,u) \d Z_u.
\end{equation} 
Since
\[
\partial_H K_H(t,s) = \sqrt{2H}(t-s)^{H-1/2} \left(\frac{1}{2H} + \ln(t-s) \right) \in L^2([0,t])
\]
for all $t \in [0,T]$ we have that there exists a unique solution to equation \eqref{e: derivative wrt H stein-stein}. This implies that $a_{H}(V_t)$ is the solution to \eqref{e: derivative wrt H stein-stein} and $D_ta_H(V_s)$ can be expressed as
\[
D_ta_H(V_s) = \partial_H D_tV_s = \rho\nu \partial_H K_H(s,t)-\kappa\rho\nu\int_t^s  \partial_H K_H(u,t) \e^{-\kappa (s-u)}\d u.
\]
Moreover, since $b_H(V_t) = -V_t \partial_H V_t$, we have
\begin{align*}
    D_tb_H(V_s) = &-D_tV_s \cdot \partial_H V_s - V_t \cdot \partial_H(D_tV_s)\\
                = & \partial_H V_s \left(  \rho\nu K_H(s,t)-\kappa\rho\nu\int_t^s  K_H(u,t) \e^{-\kappa (s-u)}\d u\right)\\
                - &V_s\left( \rho\nu \partial_H K_H(s,t)-\kappa\rho\nu\int_t^s  \partial_H K_H(u,t) \e^{-\kappa (s-u)}\d u\right).
\end{align*}
In conclusion, the ingredients needed to compute the approximate Greek formulas for the Stein-Stein model are the following:
\begin{itemize}
    \item $\mathcal{G}_T = \int_0^T V_u \d u + \rho\nu\int_0^T \left(\int_0^s D_tV_u \d u\right)(\d W_s-V_s \d s)$,
    \item $\int_0^T D_s\mathcal{G}_T \d s = \int_0^T\int_0^T D_{s\wedge u}V_{s\vee u}\d s\d u - \int_0^T \left(\int_0^u D_sV_u \d s\right)^2 \d u$,
    \item $\int_0^T \int_0^T D_t D_s \mathcal{G}_T \d s \d t = 0$,
    \item $a_{V_0}(V_t) = \e^{-\kappa t}$,
    \item $b_{V_0}(V_t) = -\e^{-\kappa t} V_t$,
    \item $D_ta_{V_0}(V_s) = 0$,
    \item $D_tb_{V_0}(V_s) = -\e^{-\kappa s}\left(\rho\nu K_H(s,t)-\kappa\rho\nu\int_t^s  K_H(u,t) \e^{-\kappa (s-u)}\d u\right)$.
    \item $a_{H}(V_t) = \partial_H V_t$, where $\partial_H V_t$ is the solution to equation \eqref{e: derivative wrt H stein-stein}.
    \item $b_H(V_t) = -V_t \partial_H V_t$.
    \item $D_t a_H(V_s) = \rho\nu \partial_H K_H(s,t)-\kappa\rho\nu\int_t^s  \partial_H K_H(u,t) \e^{-\kappa (s-u)}\d u$.
    \item 
    \begin{align*}
    D_tb_H(V_s) = &\partial_H V_s \left(  \rho\nu K_H(s,t)-\kappa\rho\nu\int_t^s  K_H(u,t) \e^{-\kappa (s-u)}\d u\right)\\
    &- V_s\left( \rho\nu \partial_H K_H(s,t)-\kappa\rho\nu\int_t^s  \partial_H K_H(u,t) \e^{-\kappa (s-u)}\d u\right).
    \end{align*}
\end{itemize}
\section{Conclusion}\label{sec:conclusion}

We used Malliavin calculus techniques to obtain formulas for computing Greeks under different rough Volterra stochastic volatility models. In particular, we showed that each model is fully characterized by the quantity $G(t,T)$, defined in \eqref{gformula}, and by its Malliavin derivative $D_s G(t,T)$ (see Lemma \ref{l:DG}). Calculating the integrals \eqref{e:iint_DsG} and \eqref{e:int_G} then leads us directly to all the Greeks formulas listed in Theorem \ref{t:greeks_general} and Proposition \ref{p: approximate Greek formulas}. In particular, we derived formulas for rough versions of the Stein-Stein, SABR and Bergomi models.

For some of the stochastic volatility models considered in the present paper, the value of the underlying price at $T$, $S_T$, does not belong to $L^2(\Omega)$. For example for rough Bergomi model, see \cite{Gassiat2019}. This justifies the extension of the integration by parts formula to the $L^1$ case in Theorem \ref{p: IBP extension} and the use of an approximation of the unity in the proof of Theorem \ref{t:greeks_general}.

We derive the formulas for all five primary Greeks $\Delta$, $\Gamma$, $\varrho$, $\mathcal{V}$ and $\mathcal{H}$. We assumed that our price process $S$ followed (\ref{e:S}) and  the volatility or the variance $V$ followed (\ref{e:V_t}). Explicitly, we calculate the Greeks for the $\alpha$RFSV model, that includes rough Bergomi and rough SABR models in some cases, the mixed $\alpha$RFSV model and the rough Stein and Stein model.

As numerical results, we calculated the Delta for $\alpha$RFSV model for European call and put. As we can see in Figure \ref{f:aRFSV_Delta}, Monte Carlo simulations converge satisfactorily. Moreover, for the first time we could present a dependence of Delta on the Hurst parameter $H$ in Figure \ref{f:aRFSV_Delta_on_H}.

Further research directions may involve other rough volatility models such as some rough Heston variant and also a further numerical analysis of all obtained formulas.

\section*{Funding}

The work of Josep Vives was partially supported by Spanish grant PID2020-118339GB-100 (2021-2024). Author Òscar Burés is supported by the predoctoral program AGAUR-FI ajuts (2025 FI-1 00580)

\section*{Acknowledgements}

Computational and storage resources were provided by the e-INFRA CZ project (ID:90254), supported by the Ministry of Education, Youth and Sports of the Czech Republic.

We thank Professor David Nualart for some comments in relation to the reference \cite{NualartOuknine2002}.

\appendix
\section{Greeks formulas in non-fractional Volterra SV models}\label{sec:appendix1}
In this Appendix we present Greeks formulas for the corresponding non-fractional SV models treated in the present paper. Concretely, for the $\alpha$SV model (Section \ref{ssec:aSV}) and for the Stein-Stein model (Section \ref{ssec:SS}).

\subsection{\texorpdfstring{$\alpha$}{a}SV model}\label{ssec:aSV}

Although \cite{MerinoPospisilSobotkaSottinenVives21ijtaf} introduced the $\alpha$RFSV model directly in the (rough) fractional form, it makes a perfect sense to consider also its non-fractional version called $\alpha$SV model, i.e. the case with $H=1/2$ , that also reduces to known models for particular choice of the parameter $\alpha$. In particular, for $\alpha=1$ we get the Bergomi model \citep{Bergomi16}, that in the particular case of $r=0$, coincides with the SABR($\beta=1$) model \citep{Hagan02}. For $\alpha=0$ the model is the non-stationary exponential volatility model presented in \cite{Gatheral18}.

Here we introduce a slightly more general model. In this case, we assume the price process satisfies the equation 

\begin{equation}
\d S_t = r S_t \d t + \sqrt{V_t} S_t \d W_t ,
\end{equation}
and the volatility process is 

$$V_t = V_0\exp\left\{\xi Z_t-\frac12 \alpha\xi^2 t\right\}.$$
Here $V_0$ and $\xi$ are positive constants and $\alpha\in [0,1].$ 

Note that 
$$D_tV_s=\rho\xi V_s {1\!\!1}_{[0,s]}(t). $$
Then, 
$$G(t,T) = \sqrt{V_t}+\frac{\rho\xi}{2}\int_t^T \sqrt{V_s}\d W_s-\frac{\rho\xi}{2}\int_t^T V_s \d s$$
and
$$D_sG(t,T)=\frac{\rho\xi}{2}\sqrt{V_{s\vee t}}+\frac{\rho^2\xi^2}{4}\int_{s\vee t}^T \sqrt{V_u}\d W_u-\frac{\rho^2\xi^2}{2}\int_{s\vee t}^T V_u \d u.$$

\subsection{Stein-Stein model}\label{ssec:SS}
In the Stein and Stein model, see \cite{SteinStein91}, we have the price process

\begin{equation}
\d S_t = r S_t \d t + V_t S_t \d W_t ,
\end{equation}
where $V$ is the volatility process described by the Gaussian Ornstein-Uhlenbeck process 

\begin{equation}
\d V_t = \kappa (\theta -V_t) \d t + \nu \d Z_t.
\end{equation}

Therefore, in this case, $\sigma(x)=x$, $u(x)=\kappa(\theta-x)$ and $v(x)=\nu.$ And first derivatives are $\sigma^{\prime}(x)=1$, $u^{\prime}(x)=-\kappa$ and $v^{\prime}(x)=0.$

Then,
$$D_tV_s=\rho\nu \e^{-\kappa (s-t)}{1\!\!1}_{[0,s]}(t).$$
And this leads to 
\begin{align*}
G(t,T)=& V_t+\int_t^T D_tV_s \d W_s-\int_t^T V_sD_tV_s \d s\\
=& V_t+\rho\nu\int_t^T e^{-\kappa (s-t)} \d W_s-\rho\nu\int_t^T V_s \e^{-\kappa (s-t)}\d s
\end{align*}
and 
\begin{equation}
D_s G(t,T)=\rho\nu \e^{-\kappa (s\vee t-s\wedge t)}-\rho\nu \e^{\kappa (s+t)}\int_{s\vee t}^T \e^{-2\kappa u}\d u.
\end{equation}
From this last equation we also deduce that the second order Malliavin derivative of $G(t,T)$ vanishes.

\begin{switch}{3}
\case{1}{%
\bibliographystyle{references/styles/jp+doi+mr+zbl}
\bibliography{new-greeks,references/references,references/references-books,references/references-own,references/preprints,references/preprints-own}
}
\case{2}{%
\bibliographystyle{jp+doi+mr+zbl}
\bibliography{new-greeks,references-export}

\begin{thebibliography}{35}
\providecommand{\natexlab}[1]{#1}
\providecommand{\url}[1]{\texttt{#1}}
\providecommand{\urlprefix}{URL }
\expandafter\ifx\csname urlstyle\endcsname\relax
  \providecommand{\doi}[1]{doi:\discretionary{}{}{}#1}\else
  \providecommand{\doi}{doi:\discretionary{}{}{}\begingroup
  \urlstyle{rm}\Url}\fi
\providecommand{\zblnumber}[1]{Zbl #1}
\providecommand{\mrnumber}[1]{MR#1}
\providecommand{\eprint}[2][]{\url{#2}}

\bibitem[{Al{\`{o}}s, Le{\'{o}}n, and Vives(2007)}]{Alos07}
\textsc{Al{\`{o}}s, E., Le{\'{o}}n, J.~A., and Vives, J.} (2007), \emph{On the
  short-time behavior of the implied volatility for jump-diffusion models with
  stochastic volatility}. Finance Stoch. 11(4), 571--589, ISSN 0949-2984,
  \doi{10.1007/s00780-007-0049-1}, \zblnumber{1145.91020}, \mrnumber{2335834}.

\bibitem[{Al\`os, Mazet, and Nualart(2000)}]{Alos00}
\textsc{Al\`os, E., Mazet, O., and Nualart, D.} (2000), \emph{Stochastic
  calculus with respect to fractional {B}rownian motion with {H}urst parameter
  lesser than {$\frac 12$}}. Stochastic Process. Appl. 86(1), 121--139, ISSN
  0304-4149, \doi{10.1016/S0304-4149(99)00089-7}, \zblnumber{1028.60047},
  \mrnumber{1741199}.

\bibitem[{Alòs and León(2021)}]{AlosLeon2021}
\textsc{Alòs, E. and León, J.~A.} (2021), \emph{An intuitive introduction to
  fractional and rough volatilities}. Mathematics 9(9), 994, ISSN 2227-7390,
  \doi{10.3390/math9090994}.

\bibitem[{Bayer, Friz, and Gatheral(2016)}]{Bayer16}
\textsc{Bayer, C., Friz, P., and Gatheral, J.} (2016), \emph{Pricing under
  rough volatility}. Quant. Finance 16(6), 887--904, ISSN 1469-7688,
  \doi{10.1080/14697688.2015.1099717}, \mrnumber{3494612}.

\bibitem[{Bennedsen, Lunde, and Pakkanen(2017)}]{Bennedsen17}
\textsc{Bennedsen, M., Lunde, A., and Pakkanen, M.~S.} (2017), \emph{Hybrid
  scheme for {B}rownian semistationary processes}. Finance Stoch. 21(4),
  931--965, ISSN 0949-2984, \doi{10.1007/s00780-017-0335-5},
  \zblnumber{1385.65010}, \mrnumber{3723378}.

\bibitem[{Bergomi(2016)}]{Bergomi16}
\textsc{Bergomi, L.} (2016), Stochastic Volatility Modeling. Chapman and
  Hall/CRC Financial Mathematics Series, Boca Raton: Chapman and Hall/CRC, ISBN
  978-1-4822-4407-6, \zblnumber{1329.91003}, \mrnumber{3445286}.

\bibitem[{Black and Scholes(1973)}]{BlackScholes73}
\textsc{Black, F. and Scholes, M.} (1973), \emph{The pricing of options and
  corporate liabilities}. J. Polit. Econ. 81(3), 637--654, ISSN 0022-3808,
  \doi{10.1086/260062}, \zblnumber{1092.91524}, \mrnumber{3363443}.

\bibitem[{Cont(2001)}]{Cont01}
\textsc{Cont, R.} (2001), \emph{Empirical properties of asset returns: stylized
  facts and statistical issues}. Quant. Finance 1(2), 223--236, ISSN 1469-7688,
  \doi{10.1080/713665670}, \zblnumber{1408.62174}.

\bibitem[{Davis and Johansson(2006)}]{Davis06}
\textsc{Davis, M. H.~A. and Johansson, M.~P.} (2006), \emph{Malliavin {M}onte
  {C}arlo {G}reeks for jump diffusions}. Stochastic Process. Appl. 116(1),
  101--129, ISSN 0304-4149, \doi{10.1016/j.spa.2005.08.002},
  \zblnumber{1081.60040}, \mrnumber{2186841}.

\bibitem[{Di~Nunno, Kubilius, Mishura, and
  Yurchenko-Tytarenko(2023)}]{DiNunno2023From}
\textsc{Di~Nunno, G., Kubilius, K., Mishura, Y., and Yurchenko-Tytarenko, A.}
  (2023), \emph{From constant to rough: A survey of continuous volatility
  modeling}. Mathematics 11(19), 4201, ISSN 2227-7390/e,
  \doi{10.3390/math11194201}.

\bibitem[{Eberlein, Eddahbi, and Lalaoui Ben~Cherif(2016)}]{Eberlein16}
\textsc{Eberlein, E., Eddahbi, M., and Lalaoui Ben~Cherif, S.~M.} (2016),
  \emph{Computation of {G}reeks in {LIBOR} models driven by time-inhomogeneous
  {L}\'{e}vy processes}. Appl. Math. Finance 23(3), 236--260, ISSN 1350-486X,
  \doi{10.1080/1350486X.2016.1243013}, \zblnumber{1396.91779},
  \mrnumber{3572024}.

\bibitem[{Eddahbi, Lalaoui Ben~Cherif, and Nasroallah(2015)}]{Eddahbi15}
\textsc{Eddahbi, M., Lalaoui Ben~Cherif, S.~M., and Nasroallah, A.} (2015),
  \emph{Computation of {G}reeks for jump-diffusion models}. Int. J. Theor.
  Appl. Finance 18(6), 1550039, 30, ISSN 0219-0249,
  \doi{10.1142/S0219024915500399}, \zblnumber{1337.91094}, \mrnumber{3397252}.

\bibitem[{El-Khatib(2009)}]{El-Khatib09}
\textsc{El-Khatib, Y.} (2009), \emph{Computations of {G}reeks in stochastic
  volatility models via the {M}alliavin calculus},
  \doi{10.48550/arXiv.0904.3247}, available at arXiv:
  \url{https://arxiv.org/abs/0904.3247}.

\bibitem[{Fourni\'{e}, Lasry, Lebuchoux, and Lions(2001)}]{Fournie01}
\textsc{Fourni\'{e}, E., Lasry, J.-M., Lebuchoux, J., and Lions, P.-L.} (2001),
  \emph{Applications of {M}alliavin calculus to {M}onte {C}arlo methods in
  finance {II}}. Finance Stoch. 5(2), 201--236, ISSN 0949-2984,
  \doi{10.1007/PL00013529}, \zblnumber{0973.60061}, \mrnumber{1841717}.

\bibitem[{Fourni\'{e}, Lasry, Lebuchoux, Lions, and Touzi(1999)}]{Fournie99}
\textsc{Fourni\'{e}, E., Lasry, J.-M., Lebuchoux, J., Lions, P.-L., and Touzi,
  N.} (1999), \emph{Applications of {M}alliavin calculus to {M}onte {C}arlo
  methods in finance}. Finance Stoch. 3(4), 391--412, ISSN 0949-2984,
  \doi{10.1007/s007800050068}, \zblnumber{0947.60066}, \mrnumber{1842285}.

\bibitem[{Gassiat(2019)}]{Gassiat2019}
\textsc{Gassiat, P.} (2019), \emph{On the martingale property in the rough
  {B}ergomi model}. Electron. Commun. Probab. 24, Paper No. 33, 9, ISSN
  1083-589X, \doi{10.1214/19-ECP239}, \zblnumber{1488.60109},
  \mrnumber{3962483}.

\bibitem[{Gatheral, Jaisson, and Rosenbaum(2018)}]{Gatheral18}
\textsc{Gatheral, J., Jaisson, T., and Rosenbaum, M.} (2018), \emph{Volatility
  is rough}. Quant. Finance 18(6), 933--949, ISSN 1469-7688,
  \doi{10.1080/14697688.2017.1393551}, \zblnumber{1400.91590},
  \mrnumber{3805308}.

\bibitem[{Hagan, Kumar, Lesniewski, and Woodward(2002)}]{Hagan02}
\textsc{Hagan, P.~S., Kumar, D., Lesniewski, A.~S., and Woodward, D.~E.}
  (2002), \emph{Managing smile risk}. Wilmott Magazine 2002(January), 84--108.

\bibitem[{Heston(1993)}]{Heston93}
\textsc{Heston, S.~L.} (1993), \emph{A closed-form solution for options with
  stochastic volatility with applications to bond and currency options}. Rev.
  Financ. Stud. 6(2), 327--343, ISSN 0893-9454, \doi{10.1093/rfs/6.2.327},
  \zblnumber{1384.35131}, \mrnumber{3929676}.

\bibitem[{Hull and White(1987)}]{HullWhite87}
\textsc{Hull, J.~C. and White, A.~D.} (1987), \emph{The pricing of options on
  assets with stochastic volatilities}. J. Finance 42(2), 281--300, ISSN
  1540-6261, \doi{10.1111/j.1540-6261.1987.tb02568.x}.

\bibitem[{Khedher(2012)}]{Khedher12}
\textsc{Khedher, A.} (2012), \emph{Computation of the delta in multidimensional
  jump-diffusion setting with applications to stochastic volatility models}.
  Stoch. Anal. Appl. 30(3), 403--425, ISSN 0736-2994,
  \doi{10.1080/07362994.2012.668440}, \zblnumber{1242.91188},
  \mrnumber{2916862}.

\bibitem[{Matas and Posp\'{\i}\v{s}il(2023)}]{MatasPospisil23aofi}
\textsc{Matas, J. and Posp\'{\i}\v{s}il, J.} (2023), \emph{Robustness and
  sensitivity analyses of rough {V}olterra stochastic volatility models}. Ann.
  Finance 19(4), 532--543, ISSN 1614-2446, 1614-2454/e,
  \doi{10.1007/s10436-023-00433-2}.

\bibitem[{Merino, Posp\'{\i}\v{s}il, Sobotka, Sottinen, and
  Vives(2021)}]{MerinoPospisilSobotkaSottinenVives21ijtaf}
\textsc{Merino, R., Posp\'{\i}\v{s}il, J., Sobotka, T., Sottinen, T., and
  Vives, J.} (2021), \emph{Decomposition formula for rough {V}olterra
  stochastic volatility models}. Int. J. Theor. Appl. Finance 24(2), 2150008,
  ISSN 0219-0249, 1793-6322/e, \doi{10.1142/S0219024921500084},
  \zblnumber{1466.91350}, \mrnumber{4257140}.

\bibitem[{Mhlanga(2015)}]{Mhlanga15}
\textsc{Mhlanga, F.~J.} (2015), \emph{Calculations of {G}reeks for jump
  diffusion processes}. Mediterr. J. Math. 12(3), 1141--1160, ISSN 1660-5446,
  \doi{10.1007/s00009-014-0459-1}, \zblnumber{1321.60168}, \mrnumber{3376837}.

\bibitem[{Mishura(2019)}]{Mishura2019}
\textsc{Mishura, Y.} (2019), \emph{{Fractional stochastic volatility:
  F-Ornstein--Uhlenbeck and F-CIR processes}}. In Computer Data Analysis and
  Modeling: Stochastics and Data Science : Proc. of the Twelfth Intern. Conf.,
  Minsk, Sept. 18-22, 2019., pp. 94--101, Minsk: BSU, ISBN 978-985-566-811-5.

\bibitem[{Montero and Kohatsu-Higa(2003)}]{Montero03}
\textsc{Montero, M. and Kohatsu-Higa, A.} (2003), \emph{Malliavin calculus
  applied to finance}. Phys. A 320(1-4), 548--570, ISSN 0378-4371,
  \doi{10.1016/S0378-4371(02)01531-5}, \zblnumber{1010.91046},
  \mrnumber{1963722}.

\bibitem[{Nualart(2006)}]{Nualart06}
\textsc{Nualart, D.} (2006), The {M}alliavin calculus and related topics.
  Probab. Appl., Berlin: Springer-Verlag, 2nd edn., ISBN 3-540-28328-5/hbk;
  978-3-642-06651-1/pbk; 3-540-28329-3/ebook, \doi{10.1007/3-540-28329-3},
  \zblnumber{1099.60003}, \mrnumber{2200233}.

\bibitem[{Nualart and Nualart(2018)}]{Nualart18}
\textsc{Nualart, D. and Nualart, E.} (2018), Introduction to {M}alliavin
  calculus, vol.~9 of \emph{Institute of Mathematical Statistics Textbooks}.
  Cambridge University Press, Cambridge, ISBN 978-1-107-61198-6;
  978-1-107-03912-4, \doi{10.1017/9781139856485}, \zblnumber{06968586},
  \mrnumber{3838464}, \urlprefix\url{https://doi.org/10.1017/9781139856485}.

\bibitem[{Nualart and Ouknine(2002)}]{NualartOuknine2002}
\textsc{Nualart, D. and Ouknine, Y.} (2002), \emph{Regularization of
  differential equations by fractional noise}. Stochastic Processes Appl.
  102(1), 103--116, ISSN 0304-4149, \doi{10.1016/S0304-4149(02)00155-2},
  \zblnumber{1075.60536}, \mrnumber{1934157}.

\bibitem[{Petrou(2008)}]{Petrou08}
\textsc{Petrou, E.} (2008), \emph{Malliavin calculus in {L}\'{e}vy spaces and
  applications to finance}. Electron. J. Probab. 13(27), 852--879, ISSN
  1083-6489/e, \doi{10.1214/EJP.v13-502}, \zblnumber{1193.60075},
  \mrnumber{2399298}.

\bibitem[{Stein and Stein(1991)}]{SteinStein91}
\textsc{Stein, E.~M. and Stein, J.~C.} (1991), \emph{Stock price distributions
  with stochastic volatility: An analytic approach}. Rev. Financ. Stud. 4(4),
  727--752, ISSN 0893-9454, \doi{10.1093/rfs/4.4.727}, \zblnumber{06857133}.

\bibitem[{Thao(2006)}]{Thao06}
\textsc{Thao, T.~H.} (2006), \emph{An approximate approach to fractional
  analysis for finance}. Nonlinear Anal. Real World Appl. 7(1), 124--132, ISSN
  1468-1218, \doi{10.1016/j.nonrwa.2004.08.012}, \zblnumber{1104.60033},
  \mrnumber{2183802}.

\bibitem[{Yamada(2017)}]{Yamada17}
\textsc{Yamada, T.} (2017), \emph{A weak approximation with {M}alliavin weights
  for local stochastic volatility model}. Int. J. Financ. Eng. 4(1), 1750002,
  ISSN 2424-7863, \doi{10.1142/S2424786317500025}, \mrnumber{3632288}.

\bibitem[{Yilmaz(2018)}]{Yilmaz18}
\textsc{Yilmaz, B.} (2018), \emph{Computation of option greeks under hybrid
  stochastic volatility models via {M}alliavin calculus}. Mod. Stoch. Theory
  Appl. 5(2), 145--165, ISSN 2351-6046, \doi{10.15559/18-vmsta100},
  \zblnumber{1390.60198}, \mrnumber{3813089}.

\bibitem[{Yolcu-Okur, Sayer, Yilmaz, and Inkaya(2018)}]{Yolcu18}
\textsc{Yolcu-Okur, Y., Sayer, T., Yilmaz, B., and Inkaya, B.~A.} (2018),
  \emph{Computation of the delta of {E}uropean options under stochastic
  volatility models}. Comput. Manag. Sci. 15(2), 213--237, ISSN 1619-697X,
  \doi{10.1007/s10287-018-0316-y}, \zblnumber{1417.91518}, \mrnumber{3817795}.

\end{thebibliography}
}
\case{3}{%

 }
\default{}
\end{switch}

\end{document}